\newtheorem{thm}{Theorem}[section]
\newtheorem{lem}[thm]{Lemma}
\newtheorem{prop}[thm]{Proposition}
\theoremstyle{remark}
\theoremstyle{definition}
\newcommand{\C}[1]{C\left( #1 \right)}
\begin{document}
\title{Zero-Outage Cellular Downlink with Fixed-Rate D2D Underlay}

\author{\IEEEauthorblockN{Nuno K. Pratas and Petar Popovski}
\IEEEauthorblockA{\\Department of Electronic Systems, Aalborg University, Denmark \\
Email: \{nup,petarp\}@es.aau.dk}}

\maketitle


\begin{abstract}
Two of the emerging trends in wireless cellular systems are Device-to-Device (D2D) and Machine-to-Machine (M2M) communications. D2D enables efficient reuse of the licensed spectrum to support localized transmissions, while M2M connections are often characterized by fixed and low transmission rates. D2D connections can be instrumental in localized aggregation of uplink M2M traffic to a more capable cellular device, before being finally delivered to the Base Station (BS). In this paper we show 
that a fixed M2M rate is an enabler of efficient Machine-Type D2D underlay operation taking place simultaneously with another  \emph{downlink} cellular transmission.
In the considered scenario, a BS $B$ transmits to a user $U$, while there are $N_M$ Machine-Type Devices (MTDs) attached to $U$, all sending simultaneously to $U$ and each using the same rate $R_M$. While assuming that $B$ knows the channel $B-U$, but not the interfering channels from the MTDs to $U$, we prove that there is a positive downlink rate that can always be decoded by $U$, leading to zero-outage of the downlink signal. This is a rather surprising consequence of the features of the multiple access channel and the fixed rate $R_M$. We also consider the case of a simpler, single-user decoder at $U$ with successive interference cancellation. However, with single-user decoder, a positive zero-outage rate exists only when $N_M=1$ and is zero when $N_M>1$. This implies that joint decoding is instrumental in enabling fixed-rate underlay operation.


\end{abstract}

\begin{IEEEkeywords}
\textbf{D2D; M2M; MTC; Underlaying; Multiple Access Channel; SIC; Joint Decoding}
\end{IEEEkeywords}


\section{Introduction}
\label{sec:Introduction}


This work is motivated by two technology trends in wireless cellular networks~\cite{6736746}: direct \emph{Device-to-Device (D2D)} communications~\cite{6231164,6163598,DBLP:journals/corr/AsadiWM13} and \emph{Machine-Type Communications (MTC)} or \emph{Machine-to-Machine (M2M)} communications~\cite{boswarthick2012m2m}. D2D communication refers to the direct links between the wireless devices using the same spectrum and air interface as in cellular communications. M2M involves a large number of low-rate low-power \emph{Machine-Type Devices (MTDs)}, attached to the cellular network, enabling various applications, such as asset/health monitoring, smart grid communications, large-scale environmental sensing, etc.

\subsection{State of the Art} 
\label{sec:state_of_the_art}

Recently, several potential synergies between D2D and M2M have been investigated~\cite{Pratas2014,2013arXiv1305.6783P,6757899}, aiming to increase the spectral reuse, range and coverage. In~\cite{Chen2013} these synergies where explored in the context of group-based operations, i.e. of a D2D-enabled cellular device that acts as a clusterhead for a group of MTDs.
In the uplink the clusterhead aggregates and forwards the gathered requests, data packets and status information from MTC devices to the connected 3GPP Base Station (BS). In the downlink, the cluster head relays management messages and data packets from the BS to the MTDs in the group.
The use of D2D communications reduces the signaling congestion on the air interface and the network management load, which goes in-line with the spirit of the group-based management defined by 3GPP~\cite{3GPPTS22.368}.
In~\cite{5506183} it was considered how direct communication within a cluster of devices can improve the performance of a conventional cellular system, by leveraging the selection between direct D2D and infrastructure relaying.
In~\cite{6497010}, the impact of the MTDs on the cellular network is mitigated by randomly deploying data collectors to gather the traffic from the MTDs.
In~\cite{6786066}, the authors use a queueing model and coalition game formulation to analyze a scenario where the MTDs are able to transmit to a macro or small-cell BS, or perform relay transmission. One of the observation was that the overall throughput of the MTDs is higher if they have a low duty cycle.

Standardization considers D2D under the name Proximity Services (ProSe)~\cite{3GPPTR22.803}, where among several other use cases, it highlights the use case in which a cellular device improves the coverage by acting as a relay on behalf of one or more other cellular devices that are outside the network range. From architecture viewpoint, D2D relaying in ProSe is referred to as 
\emph{range extension} and is specific to public safety use cases~\cite{3GPPTR23.703}.
The report~\cite{3GPPTR36.843}, besides providing design aspects for the physical and the upper layers, it also states that one of the requirements is the support of a large number of concurrently participating ProSe-enabled users.


\subsection{Our Contribution} 
\label{sec:our_contribution}

\begin{figure}
	\centering
		\includegraphics[width=\linewidth]{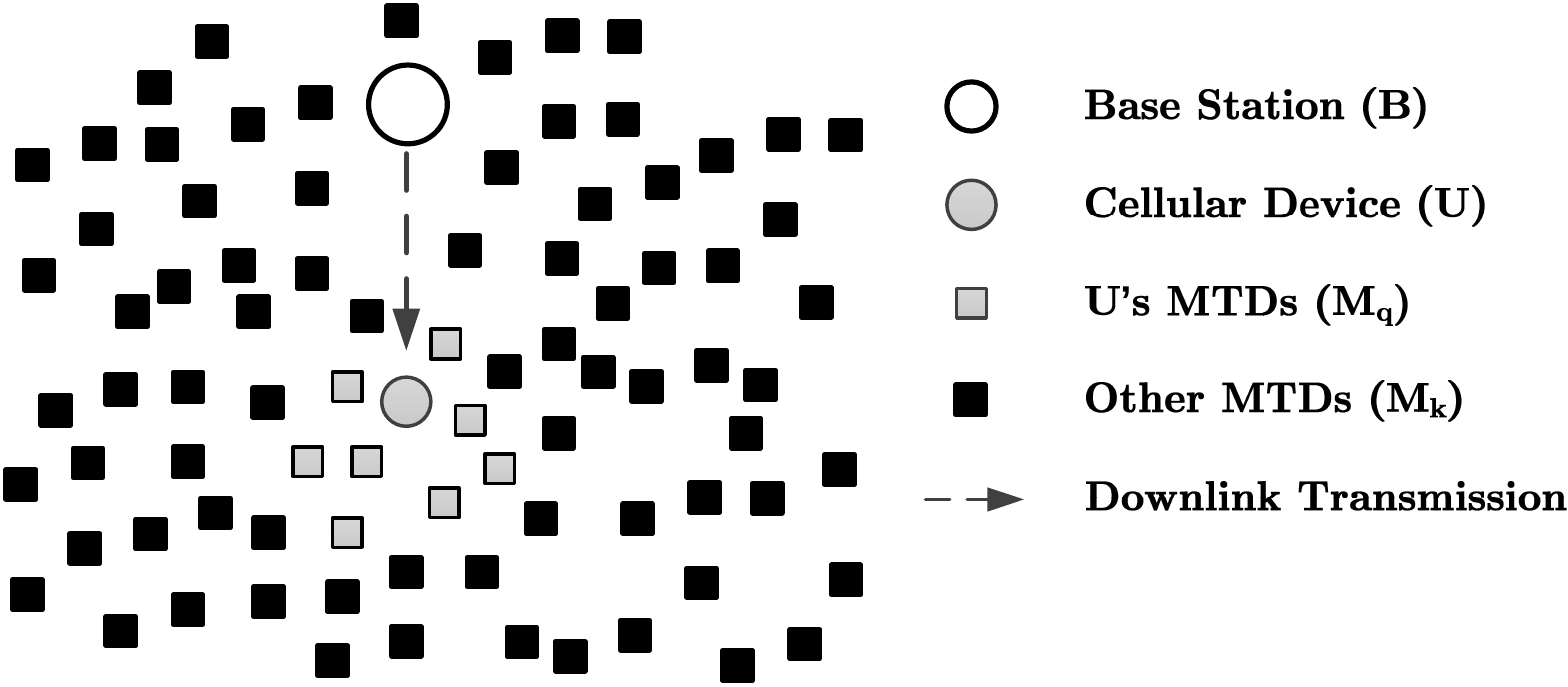}
	\caption{The cellular user $U$, receives simultaneously, in the same time and frequency resource, the transmissions from the Base Station $B$ and from the Machine-Type Devices $M_q$ associated with $U$. The black devices are MTDs whose signals are not decoded by $U$ and create background non-cancelable interference to $U$.}
	\label{fig:Scenario}
\end{figure}
A motivating scenario for this work is the one in which one or multiple MTDs are attached to the cellular network through a D2D connection to a more capable device. For example, one can think health sensors as MTDs that are attached to the body and should supply data to the cellular network about the monitored user. Each sensor is capable to communicate directly with the cellular infrastructure; however, if the person has a smartphone at disposal, the sensor uses a D2D link to transmit the data to the smartphone. Speaking in more general terms, in our scenario, the D2D link is used by an M2M device to send uplink traffic to another, more capable device. The latter may aggregate the M2M traffic from one or multiple M2M devices and then send it in the uplink to the BS. The architecture in which the M2M traffic is locally aggregated to a more capable device rather than being sent directly over a wide-area connection has at least two advantages. First, the uplink power can be lowered. Second, there are less M2M devices that contend directly to get uplink resources at the BS. The latter is important if we consider a massive number of M2M devices in a single cell, such that they cannot be served if activated approximately at the same time. However, using D2D links and traffic aggregation, the resource sharing occurs locally, among a relatively lower number of M2M devices.

Fig.~\ref{fig:Scenario} depicts out the network-assisted D2D solution scenario~\cite{6163598}, which enables concurrent use of the same communication resource by MTDs and normal cellular devices.
The M2M connection is assumed to have a fixed, low rate $R_M$. We also consider downlink traffic that is sent simultaneously with the D2D transmission. The downlink traffic is not necessarily a M2M traffic and therefore the rate is adaptable, aiming to maximize the downlink throughput according with the channel conditions. 
Our scenario combines the following four aspects in a D2D setting: (1) the underlay of \emph{downlink} cellular communications~\cite{Pratas2014,6214244,5910123,5199353,PekkaJANIS2009}; (2) the use of interference cancellation techniques~\cite{Pratas2014,6214244}; (3) the use of D2D to support a fixed and \emph{low-rate M2M} connection~\cite{Pratas2014,6175090,5910120} and (4) considers the case of partially available Channel State Information at the Transmitter (CSIT)~\cite{Pratas2014,5450264,5073734}. 

We show that, despite the activation of the MTDs transmission and the fact that the interfering channel $M_q-U$ is not known, it is still possible for the BS $B$ to select a positive downlink rate for the cellular link $B-U$ that experiences zero outage. 
In other words, the downlink signal is always decoded by $U$, regardless of the strength of the interfering channel from $M_q$ to $U$. The most interesting aspect of it is that $B$ only needs to know the instantaneous channel $B-U$, but it does not need to know anything about the channel $M_q-U$, not even the channel statistics. We show that $B$ needs only to know the MTD rate $R_M$, which is assumed fixed and a priori known by all. Intuitively, this is possible because, for fixed $R_M$, if the link $M_q-U$ is weak, then the interference from $M_q$ is treated as noise, but if it is very strong, then the signal from $M_q$ is decoded by $U$ and thus the interference cancelled. The maximal zero-outage downlink rate occurs in the transition between these two regimes. We first consider the full decoding region of a multiple access channel, referred to as Joint-User Decoding (JD), and prove a closed-form formula of the maximal zero-outage downlink rate when there are $N_M$ MTDs attached to $U$. For the special case $N_M=1$, we also consider a simpler, Single-User Decoding (SD), where $U$ needs to use only single-user decoding. 
This work is a generalization of the scenario first presented in~\cite{Pratas2014,2013arXiv1305.6783P}, where the M2M and D2D coexistence was analyzed in a different network topology and assumptions.


The paper is organized as follows.
In Section~\ref{sec:SystemModel} we present the system model, 
followed by Section~\ref{sec:ZeroOutageRateUpperbound}, where we derive the maximal zero outage downlink rate in both JD and SD.
In Section~\ref{sec:PerformanceAnalysis}, we provide analytical bounds and approximations for the performance metrics using stochastic geometry tools.
In Section~\ref{sec:Numerical Results}, we provide the comparison between the analytical bounds and approximations and the Monte Carlo simulations results.
Finally, we conclude the paper in Section~\ref{sec:Conclusion}.


\section{System Model}
\label{sec:SystemModel}

Fig.~\ref{fig:Scenario} depicts the scenario, focused on the cellular user $U$ that receives downlink traffic from a BS denoted by $B$. 
There are $N_M$ MTDs associated with $U$\footnote{We note that the MTDs transmit their own data and are not acting as relays to $B$'s transmission.}, denoted by $M_i$, $i=1 \ldots N_M$.
Each device transmits to $U$ during the downlink transmission of $B$, such that $U$ observes a multiple access channel (MAC) with $N_M+1$ transmitters:
\begin{equation} \label{y_u}
	y_{U} = h_{B} x_{B} + \sum_{i=1}^{N_M} h_{M_i} x_{M_i} + \tilde{z}
\end{equation}
where $h_{B}$ and $h_{M_i}$ are the complex gains of the channels $B-U$ and $M_i-U$, respectively. $x_{B}$ and $x_{M_i}$ are given respectively by the circular zero-mean Gaussian complex signal transmitted by the $B$, $M$ and $M_i$ nodes, such that the respective variances are $E[|x_{B}|^2] = P_B$ and $E[|x_{M_i}|^2] = P_{M}$, where $P_B$ and $P_M$ represent the constant power levels used by $B$ and $M_i$, respectively.
Finally, $\tilde{z}$ is a complex Gaussian variable that contains the noise as well as the Gaussian-approximated interference from the other MTDs not associated with $U$.
We assume $E[|\tilde{z}|^2]=\tilde{\sigma}^2$ and in Section~\ref{sec:PerformanceAnalysis} we will explicitly address the modeling of the interference.

The Signal-to-Noise Ratio (SNR) $\gamma_{i}$ for the signal transmitted from node $i$, where $i \in \{B, M_1, M_2, \ldots M_{N_M}\}$, to $U$ is defined as
\begin{equation}
	\gamma_{i} = \frac{P_i |h_i|^2}{\tilde{\sigma}^2}
\end{equation}
the channel coefficient $|h_i|^2$ contains the path loss and the fading, and where $P_i=P_M$ for $i \neq B$. All links are assumed to be non-Line-of-Sight and characterized by block Rayleigh fading, such that both the channel fading gains and aggregated interference do not vary during a slot in which a single 
packet is sent. The model of the path loss and fading is presented in Section~\ref{sec:PerformanceAnalysis}.

All transmissions have normalized bandwidth of $1$ Hz, therefore the time duration can be measured in number of symbols.
All transmitters use capacity-achieving Gaussian codebooks, such that if a link has a SNR of $\gamma$ in a given slot, then the maximal achievable rate is 
\begin{equation}
	R = \log_2(1 + \gamma) = C(\gamma)
\end{equation}
Similarly, for a given rate $R_i$ we can compute the minimal required SNR $\Gamma_i$ that the link should have in order for the receiver to decode the signal successfully in absence of other interfering signals. This is given as:
\begin{equation}
	\Gamma_i = C^{-1}(R_i) = 2^{R_i} - 1
\end{equation}
The downlink rate $R_B$ can be adapted from slot to slot, while each MTD uses always a constant rate $R_M$. When there is no danger of causing confusion and for the sake of brevity, we will sometimes refer to $\Gamma_i$ as ``rate''. 
Further, we will denote the SNR of the $M_i$ MTD simply as $\gamma_i$.

We assume that $B$ knows the instantaneous Channel State Information (CSI) of the $B-U$ link, represented by the knowledge of $\gamma_B$. 
The most interesting assumption is that $B$ does not have any knowledge of the channel realization or channel statistics of the link $M_i-U$. $B$ only knows the rate $R_M$ employed by the MTDs, which can be expressed as $\Gamma_M=C^{-1}(R_M)$.
We note that $U$ estimates the channel realizations from all links associated with the transmissions that it decodes, i.e. $U$ has CSI at the Receiver, obtained through e.g. unique preambles of the transmitting MTDs and $B$.

\section{Selection of the Zero-Outage Downlink Rate $R_B$}
\label{sec:ZeroOutageRateUpperbound}

\subsection{Single MTD: $N_M=1$}
\label{sec:SingleMTDNM1}

We start with the case $N_M=1$ and show how to select the maximal downlink rate $R_B$, for given $\gamma_B$, such that the signal sent from the $B$ experiences zero-outage, i.e. $U$ is able to decode it with probability $1$, regardless of the interference caused by the MTDs. The intuition why such a rate exists can be explained as follows. If the interference from $M_1$ is weak, then $\gamma_1$ is low and this interference should be treated as noise. As $\gamma_1$ grows while $R_M$ (i.e. $\Gamma_M$) stays fixed, then the MTD signal becomes decodable, such that it can be removed.  
Hence, there are two operating regimes for the decoder at $U$. We will show that $B$ can always select a positive rate $R_B$ that does not result in outage, although $B$ does not even know the operating regime of the decoder of $U$. 
We denote this decoding setting as Joint User Decoding (JD).

In a model of a standard multiple access channel (MAC) the assumption is that the receiver should decode successfully the signals of all the transmitters. In our case, $U$ is the receiver of two signals with rates $R_B$ and $R_{1}=R_M$, such that the MAC inequalities are:
\begin{align}
	R_M &\leq \C{\gamma_{1}} \nonumber \\ 
	R_B &\leq \C{\gamma_B} \nonumber \\ 
	R_M + R_B &\leq \C{\gamma_{1} + \gamma_B} \label{eq:MACinequalities}
\end{align}
\begin{figure}[tb]
	\begin{center}
		\includegraphics[width=\linewidth]{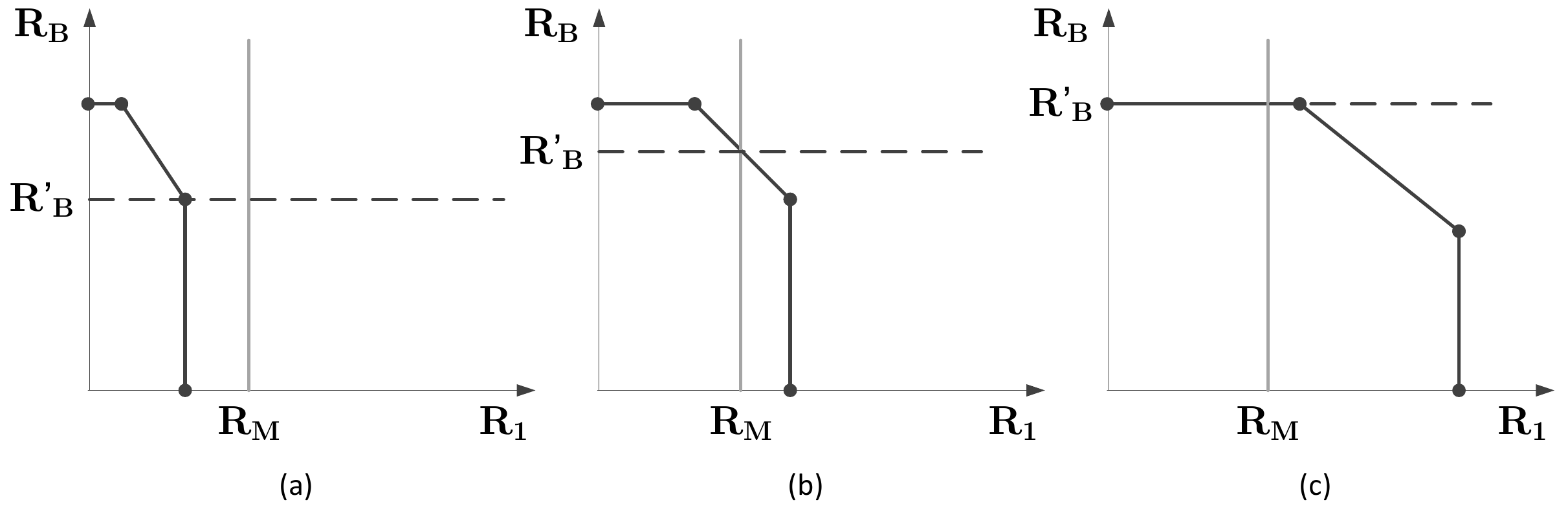}
	\end{center}
	\caption{Illustration of the working regimes of a Joint decoder, where one of the links has fixed rate. The transmission from $B$: (a) is decoded while treating the MTD transmission as noise; (b) is jointly decoded with the transmission from the MTD; (c) is decoded in the presence of noise, after the transmission from the MTD has been decoded and subtracted from the composite signal.}
	\label{fig:JDRegime}
\end{figure}
Our objective is different from a standard MAC treatment, since $R_M$ is fixed and we need to analyze the decodability of $R_B$, regardless of whether $R_M$ is decodable. Fig.~\ref{fig:JDRegime} depicts the three working regimes of the Joint Decoder. If $\gamma_{1}<\Gamma_M$, then the first inequality in (\ref{eq:MACinequalities}) is violated, such that the signal from $M_1$ is treated as noise and the maximal achievable downlink rate is:
\begin{equation}\label{eq:TreatGMasNoise}
	R_B \leq C \left(\frac{\gamma_B}{1+\gamma_{1}} \right)
\end{equation}
On the other hand, when $\gamma_{1} \geq \Gamma_M$ then there are two options: (a) either continue to treat the signal from $M_1$ as noise and use~(\ref{eq:TreatGMasNoise}) to determine the maximal $R_B$ or (b) employ the MAC inequalities~(\ref{eq:MACinequalities}), which leads to the following:
\begin{equation}\label{eq:DecodeGM}
	R_B \leq \min \{ \C{\gamma_B},  \C{\gamma_{1} + \gamma_B} - R_M\}=\min \{ \C{\gamma_B},  \C{\gamma_{1} + \gamma_B} -C(\Gamma_M)\}
\end{equation}

However, note that if for given $\gamma_{1} \geq \Gamma_M$ the expression~(\ref{eq:TreatGMasNoise}) leads to a higher bound on $R_B$ compared to~(\ref{eq:DecodeGM}), then $U$ should treat the signal from $M_1$ as noise (but we will see that this is not the case). With a slight abuse of notation, we can define $R_B (\gamma_{1})$ as the maximal achievable $R_B$ for given $\gamma_{1}$, which can then be written compactly as: 
\begin{equation}\label{eq:RBcompact}
	R_B(\gamma_{1}) = \left\{
  \begin{array}{lr}
    C \left(\frac{\gamma_B}{1+\gamma_{1}} \right) &  \gamma_{1} <  \Gamma_M\\
    \max\{C \left(\frac{\gamma_B}{1+\gamma_{1}} \right), \min \{ \C{\gamma_B},  \C{\gamma_{1} + \gamma_B} -C(\Gamma_M)\}  \}  &  \gamma_{1} \geq \Gamma_M \\
  \end{array}
\right.
\end{equation}
\begin{figure}
	\centering
		\includegraphics[width=0.8\linewidth]{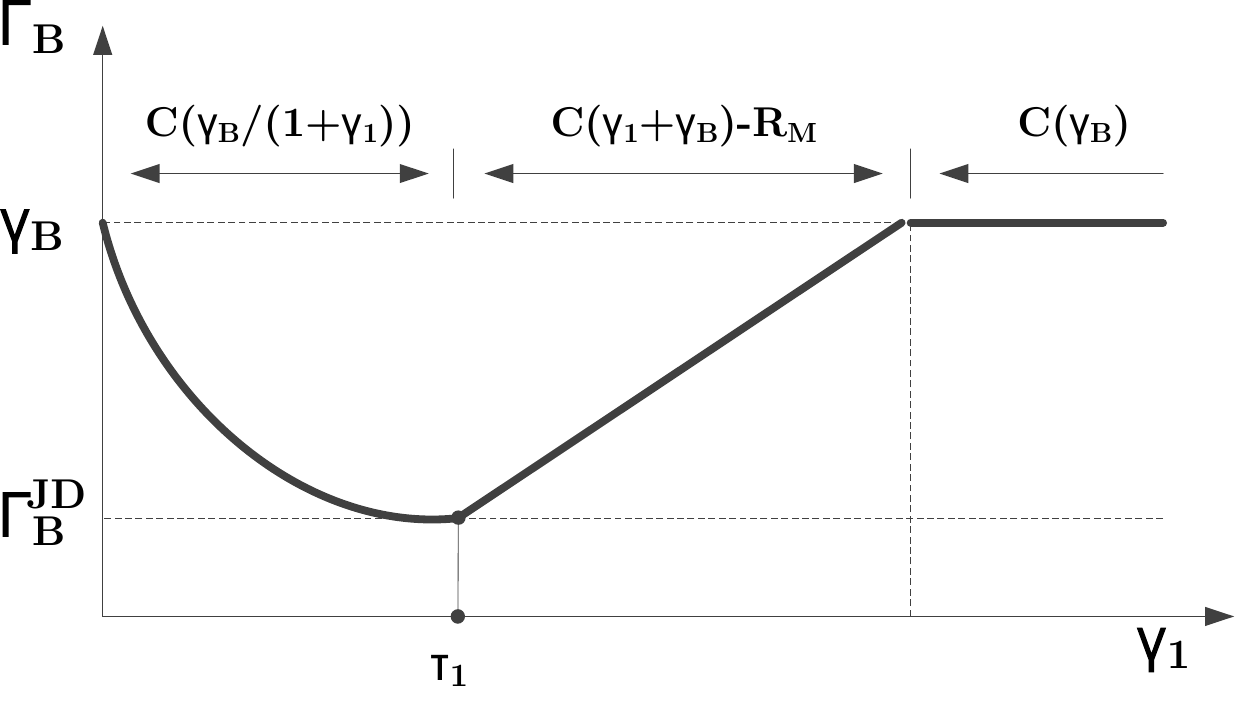}
	\caption{Achievable $\Gamma_B$ when using Joint User Decoding (JD) given known $\gamma_B$ that does not put the $B-U$ link in outage.}
	\label{fig:RateRegionJD_PoB}
\end{figure}
Fig.~\ref{fig:RateRegionJD_PoB} depicts how this function looks like. When $\gamma_{1}=\tau_1<\Gamma_M$, $R_B(\gamma_{1})$ decreases, since a larger $\gamma_{1}$ means a larger noise, corresponding to regime depicted in Figure~\ref{fig:JDRegime}(a). After $\gamma_1$ reaches $\Gamma_M$, $R_B(\gamma_{1})$ starts to increase, corresponding to regime depicted in Figure~\ref{fig:JDRegime}(b). Finally, when the link $M_1-U$ becomes too strong $\gamma_1 \geq \tau_2$, the downlink rate reaches its maximal possible value $R_B(\gamma_{1})=C(\gamma_B)$ where the interference effect from $M_1-U$ vanishes, corresponding to regime depicted in Figure~\ref{fig:JDRegime}(c).

We can now state the following:
\begin{lem}\label{prop:GammaB_JUD}
	Let there be $N_M=1$ MTD sending at rate $R_M$. Let $B$ know the rate $R_M$ and the SNR $\gamma_B$, but not $\gamma_{1}$. Then the maximal downlink transmission rate that is always 	decodable by $U$ is $R_B = C(\Gamma^{JD}_B)$ where
	\begin{equation}\label{eq:GammaBSingleUser}
		\Gamma^{JD}_B = \frac{\gamma_B}{1 + \Gamma_M}
	\end{equation}
and $\Gamma_M=C^{-1}(R_M)$.  
\end{lem}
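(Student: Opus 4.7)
The plan is to show that the zero-outage condition forces $B$ to pick $R_B \le \inf_{\gamma_1 \ge 0} R_B(\gamma_1)$, where $R_B(\gamma_1)$ is the expression in (\ref{eq:RBcompact}), and then to locate that infimum. Since $B$ has no information at all about $\gamma_1$ (neither realization nor statistics), a rate is \emph{always} decodable if and only if it is achievable for every possible value of $\gamma_1 \in [0,\infty)$, so the maximal zero-outage rate is exactly $\min_{\gamma_1 \ge 0} R_B(\gamma_1)$.

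First I would treat the regime $\gamma_1 < \Gamma_M$. Here $R_B(\gamma_1) = C\!\left(\gamma_B/(1+\gamma_1)\right)$ is strictly decreasing in $\gamma_1$ since $C(\cdot)$ is increasing and its argument is decreasing in $\gamma_1$. Hence the infimum over this interval is approached as $\gamma_1 \uparrow \Gamma_M$ and equals $C\!\left(\gamma_B/(1+\Gamma_M)\right)$.

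Next I would examine the regime $\gamma_1 \ge \Gamma_M$ and show that $R_B(\gamma_1)$ never drops below $C\!\left(\gamma_B/(1+\Gamma_M)\right)$ there. The outer $\max$ in (\ref{eq:RBcompact}) guarantees that the ``treat-as-noise'' branch $C\!\left(\gamma_B/(1+\gamma_1)\right)$ is always a lower bound on $R_B(\gamma_1)$, so a clean and sufficient route is to focus on the MAC branch. At the boundary $\gamma_1 = \Gamma_M$ a short algebraic check shows
\begin{equation*}
C(\gamma_1+\gamma_B)-C(\Gamma_M)\Big|_{\gamma_1=\Gamma_M} = \log_2\!\frac{1+\Gamma_M+\gamma_B}{1+\Gamma_M} = C\!\left(\frac{\gamma_B}{1+\Gamma_M}\right),
\end{equation*}
which is $\le C(\gamma_B)$, so the $\min$ in the MAC branch equals $C\!\left(\gamma_B/(1+\Gamma_M)\right)$ at $\gamma_1=\Gamma_M$. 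For $\gamma_1 > \Gamma_M$, the quantity $C(\gamma_1+\gamma_B)-C(\Gamma_M)$ is strictly increasing in $\gamma_1$ while $C(\gamma_B)$ is a fixed ceiling, so the MAC branch is non-decreasing and stays $\ge C\!\left(\gamma_B/(1+\Gamma_M)\right)$ throughout. Taking the $\max$ with the treat-as-noise branch only increases $R_B(\gamma_1)$.

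Putting the two regimes together, the infimum of $R_B(\gamma_1)$ over all $\gamma_1 \ge 0$ is attained at $\gamma_1=\Gamma_M$ and equals $C\!\left(\gamma_B/(1+\Gamma_M)\right)$; equivalently, the largest SNR target that never fails is $\Gamma^{JD}_B = \gamma_B/(1+\Gamma_M)$, establishing (\ref{eq:GammaBSingleUser}). The only subtle step is the boundary computation at $\gamma_1 = \Gamma_M$ showing that the treat-as-noise branch and the MAC joint-decoding branch agree there; this continuity is what pins down the minimum and is the main (though very small) obstacle, because without it one might fear a jump discontinuity that would force an even smaller zero-outage rate.
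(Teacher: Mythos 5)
Your proposal is correct and follows essentially the same route as the paper's proof: reduce the problem to computing $\min_{\gamma_1\ge 0} R_B(\gamma_1)$ from (\ref{eq:RBcompact}), show the treat-as-noise branch stays above $C\!\left(\gamma_B/(1+\Gamma_M)\right)$ for $\gamma_1<\Gamma_M$, lower-bound the joint-decoding branch by the same quantity for $\gamma_1\ge\Gamma_M$, and verify the value is attained at $\gamma_1=\Gamma_M$. The only cosmetic difference is that you bound $R_B(\gamma_1)$ from below directly by the MAC-branch $\min$ rather than first proving (as the paper does) that the outer $\max$ actually equals that $\min$ in the second regime; this is a harmless shortcut.
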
 
\begin{proof}
In order to prove the lemma, we need to show that the $\min_{\gamma_{1} \geq 0} R_B(\gamma_{1})=C(\Gamma^{JD}_B)$. This is because if $C(\Gamma_B) \leq R_B(\gamma_{1})$, then $U$ can always decode the signal from $B$. 

Let $0\leq \gamma_{1} < \Gamma_M$. Then:
\begin{equation} \label{eq:CGammaBvsNoise}
	R_B(\gamma_{1})= C \left(\frac{\gamma_B}{1+\gamma_{1}} \right) \stackrel{\mathrm{(a)}}{>}  C \left(\frac{\gamma_B}{1+\Gamma_{M}} \right)=C(\Gamma^{JD}_B)
\end{equation}
where (a) follows from $\gamma_{1} < \Gamma_M$.

Now let $\gamma_{1} \geq \Gamma_M$. The  $\gamma_{1}>0$ implies:
\begin{equation}\label{eq:CGammaBvsNoise1}
	C \left(\frac{\gamma_B}{1+\gamma_{1}} \right) < C(\gamma_B)
\end{equation}
Furthermore, 
\begin{align}\label{eq:CGammaB-GammaMvsNoise}
 \C{\gamma_{1} + \gamma_B} -C(\Gamma_M)  &= \log_2 \left( \frac{1+\gamma_{1} + \gamma_B}{1+\Gamma_M} \right) \\ \nonumber
																				 &\stackrel{\mathrm{(a)}}{\geq} \log_2 \left(1+ \frac{\gamma_B}{1+\Gamma_M} \right)\\ \nonumber
																				 &\stackrel{\mathrm{(b)}}{\geq} \log_2 \left(1+ \frac{\gamma_B}{1+\gamma_{1}} \right) 
\end{align}
where both (a) and (b) follow from $\gamma_{1} \geq \Gamma_M$. From (\ref{eq:CGammaBvsNoise1}) and (b) in (\ref{eq:CGammaB-GammaMvsNoise}) it follows that:
\begin{align}
	&\max\left\{C \left(\frac{\gamma_B}{1+\gamma_{1}} \right), \min \left\{ \C{\gamma_B},  \C{\gamma_{1} + \gamma_B} -C(\Gamma_M)\right\}  \right\}\\ \nonumber
	&=\min \left\{ \C{\gamma_B},  \C{\gamma_{1} + \gamma_B} -C(\Gamma_M)\right\} 
\end{align}
while from $C(\gamma_B) > C \left(\frac{\gamma_B}{1+\Gamma_M} \right)$ and (a) in (\ref{eq:CGammaB-GammaMvsNoise}) it follows that:
\begin{align}
	C(\Gamma^{JD}_B)=C \left(\frac{\gamma_B}{1+\Gamma_M} \right) \leq \min \{ \C{\gamma_B},  \C{\gamma_{1} + \gamma_B} -C(\Gamma_M)\} 
\end{align}
We have thus proved that $C(\Gamma^{JD}_B)$ is always decodable, regardless of the value of $\gamma_{1}$. On the other hand, from (a) in (\ref{eq:CGammaB-GammaMvsNoise}) it follows that 
\begin{equation}
	R_B(\Gamma_M)=C(\Gamma^{JD}_B)
\end{equation}
such that $C(\Gamma^{JD}_B)$ is the minimal value of $R_B(\Gamma_M)$ in the interval $\gamma_{1}>0$. It is easily checked that $R_B(\Gamma_M)$ is continuous, since if $\gamma_{1}$ is treated as noise, the achievable rate for $\gamma_{1}=\Gamma_M$ is $C(\Gamma^{JD}_B)$.
\end{proof}

A final remark is in order. The previous lemma shows that it is possible for $U$ to decode the signal from $B$, but it only indirectly focuses on \emph{how} $U$ does it. We refer again to the three regions on Fig.~\ref{fig:RateRegionJD_PoB}. When $\gamma_{1}< \Gamma_M$, $U$ treats the signal from $M_1$ as noise. When $\Gamma_M \leq \gamma_{1}< \Gamma_M(1+\gamma_B)$, $U$ jointly decodes the signals from $B$ and $M_1$ and this is arguably the most complex regime of operation; we will therefore consider a simplified receiver in Section~\ref{sec:SuccessiveUserDecoding}. Finally, when $\gamma_{1} \geq \Gamma_M(1+\gamma_B)$, $U$ decodes first the signal from $M_1$, subtracts it, and proceeds to decode the signal of $B$ using single-user decoding. Note that it is not necessary for $B$ to be aware in which decoding regime does $U$ operate, it only matters that $U$ can decode $R_B$.

\subsection{Multiple MTDs: $N_M>1$}
\label{sec:MultipleMTDsNM1}

When multiple MTDs are associated with a single device $U$, then $U$ observes a multiple access channel of $N_M+1>2$ users. Nevertheless, we can use the insights of the proof for the case $N_M=1$ in order to state and prove the maximal downlink rate $R_B$ that can always be decoded by $U$, irrespective of the instantaneous channel gains $\gamma_{1}, \gamma_{2}, \cdots \gamma_{N_M}$. 

In order to provide an intuition, let us assume that there are $N_M=2$ MTDs with $\gamma_{1}$ and $\gamma_{2}$, respectively. Both MTDs transmit by using the same rate $R_M=C(\Gamma_M)$. If both SNRs are very low, such that $\gamma_{1}<\Gamma_M$ and $\gamma_{2}<\Gamma_M$, then both signals should be treated as noise, such that the maximal achievable rate for $R_B$ would be
\begin{equation}
C \left(\frac{\gamma_B}{1+\gamma_{1}+\gamma_{2}} \right) \label{eq:GammaM1M2treatedasnoise}
\end{equation}
Let both SNRs increase and reach $\gamma_{1}=\gamma_{2}=\Gamma_M$. Then one might think, following the argument in the proof of Lemma~\ref{prop:GammaB_JUD}, that this is the point where both MTD signals become decodable. This would lead to a rate $R_B=C\left(\gamma_B/(1+2\Gamma_M) \right)$ which should be the desired rate with zero outage. However, this argument is flawed. The reason is that in the multiple access channel consisting of $M_1$ and $M_2$ there are three inequalities that need to be satisfied in order to have both signals decoded. The third inequality is:
\begin{equation} \label{eq:ThirdInequalityM1M2}
	R_M+R_M \leq \log_2(1+\gamma_{1}+\gamma_{2})
\end{equation}
The reader can easily check that this is violated when $\gamma_{1}=\gamma_{2}=\Gamma_M$. Hence, we need to look for the minimal value of $R_B$ by considering the joint decodability of both signals. In order to satisfy (\ref{eq:ThirdInequalityM1M2}) with equality, the following needs to hold:
\begin{equation}
	1+\gamma_{1}+\gamma_{2}=(1+\Gamma_M)^2
\end{equation}
Thus, if we set $\gamma_{1}=\gamma_{2}=\frac{(1+\Gamma_M)^2-1}{2}$, then all three inequalities of the multiple access channel (with $M_1$ and $M_2$ as transmitters) are satisfied. This leads us to conjecture that the maximal downlink rate that has zero probability of outage is specified by
\begin{equation}
	\Gamma_{B,2}= \frac{\gamma_B}{(1+\Gamma_M)^2}
\end{equation}
This conjecture is proved in its general form through the following theorem:
\begin{thm}\label{thm:GammaB_NM>1}
Let there be $N_M$ MTDs, each sending at rate $R_M$ and $\gamma_{i}$ is the SNR of the link $M_i-U$. Let $B$ know $R_M$ and the SNR $\gamma_B$, but none of the SNRs $\gamma_{1}, \gamma_{2}, \ldots \gamma_{N_M}$. Then the maximal downlink transmission rate that is always decodable by $U$ is $R_{B,N_M} = C(\Gamma^{JD}_{B,N_M})$ where
\begin{equation}\label{eq:GammaBNM>1Users}
	\Gamma^{JD}_{B,N_M}= \frac{\gamma_B}{(1 + \Gamma_M)^{N_M}}
\end{equation}
and $\Gamma_M=C^{-1}(R_M)$.  
\end{thm}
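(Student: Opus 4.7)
My plan is to generalize the proof of Lemma~\ref{prop:GammaB_JUD} to $N_M$ MTDs in two parts: a tightness (converse) argument and an achievability argument for arbitrary channel realizations.

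For the tightness part I would exhibit the symmetric worst-case configuration $\gamma_1=\gamma_2=\cdots=\gamma_{N_M}=\gamma^{\star}:=((1+\Gamma_M)^{N_M}-1)/N_M$ suggested by the $N_M=2$ calculation that precedes the theorem. At this symmetric point the MTD-only sum-rate MAC constraint $N_M R_M\le C(\sum_i \gamma_i)$ holds with equality, so that joint decoding of all MTDs together with $B$ is just barely feasible; the sum-rate bound $R_B+N_M R_M\le C(\gamma_B+\sum_i\gamma_i)$ then collapses to $R_B\le C(\Gamma^{JD}_{B,N_M})$, and treating all MTDs as noise yields the same bound $C(\gamma_B/(1+\sum_i\gamma_i))=C(\Gamma^{JD}_{B,N_M})$. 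A three-regime analysis in the spirit of Fig.~\ref{fig:RateRegionJD_PoB} shows that no partial-decoding subset $S$ with $0<|S|<N_M$ can do better at $\gamma^{\star}$.

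For the achievability part I would induct on $N_M$ with Lemma~\ref{prop:GammaB_JUD} as the base case. Given $\gamma_1,\ldots,\gamma_{N_M}$ with aggregate $\Sigma=\sum_i\gamma_i$, the easy case $\Sigma<(1+\Gamma_M)^{N_M}-1$ lets the receiver treat all MTDs as noise and obtain $R_B\le C(\gamma_B/(1+\Sigma))>C(\Gamma^{JD}_{B,N_M})$. Otherwise $\Sigma\ge(1+\Gamma_M)^{N_M}-1$, and I would sort the SNRs in decreasing order and attempt to SIC-decode the strongest MTD $M_{(1)}$ while treating everything else---including $B$---as noise. If the SIC condition $\gamma_{(1)}\ge\Gamma_M(1+\gamma_B+\sum_{j\ge 2}\gamma_{(j)})$ holds, cancel $M_{(1)}$ and invoke the inductive hypothesis on the residual $(N_M-1)$-MTD system, which returns a rate $\ge C(\gamma_B/(1+\Gamma_M)^{N_M-1})>C(\Gamma^{JD}_{B,N_M})$. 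Otherwise I would identify a non-empty subset $S\subseteq\{1,\ldots,N_M\}$ whose joint decoding with $B$ is feasible and, mirroring the chain (\ref{eq:CGammaBvsNoise})--(\ref{eq:CGammaB-GammaMvsNoise}) from Lemma~\ref{prop:GammaB_JUD}, clear denominators in the sum-rate bound $R_B+|S|R_M\le C((\gamma_B+\sum_{i\in S}\gamma_i)/(1+\sum_{j\notin S}\gamma_j))$ and exploit the factorization $(1+\Gamma_M)^{N_M}=(1+\Gamma_M)^{|S|}(1+\Gamma_M)^{N_M-|S|}$ to conclude $R_B\ge C(\Gamma^{JD}_{B,N_M})$.

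The main obstacle I expect is the combinatorial verification, in the $\Sigma$-large regime without a valid single-MTD SIC step, that the chosen $S$ satisfies \emph{every} MAC subset inequality at the rate tuple $(C(\Gamma^{JD}_{B,N_M}),R_M,\ldots,R_M)$ simultaneously---not merely the sum-rate bound. The cleanest route I see is a telescoping argument in which each subset bound for $T\subseteq S$ reduces to a scalar inequality of the form $\sum_{i\in T}\gamma_i\ge ((1+\Gamma_M)^{|T|}-1)(1+\sum_{j\notin S}\gamma_j)$, all of which should follow step by step from the greedy construction of $S$. If that telescoping proves delicate for highly unbalanced SNR profiles, a fallback is to symmetrize the $\gamma_i$ toward $\gamma^{\star}$ via a worst-case majorization / data-processing argument, reducing the arbitrary realization to the symmetric configuration already handled in the converse.
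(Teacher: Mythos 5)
Your converse is essentially the paper's: the symmetric point $\gamma_1=\cdots=\gamma_{N_M}=((1+\Gamma_M)^{N_M}-1)/N_M$ is exactly the configuration used in Appendix~\ref{sec:ProofOfGammaB_NM>1}, where all MTD subset constraints are verified (via monotonicity of $((1+\Gamma_M)^x-1)/x$) and the full sum-rate constraint is shown to be the binding one. The achievability direction, however, has a genuine gap precisely where you say you expect an obstacle, and neither fallback you sketch closes it. Your case split leaves unresolved the regime where $\sum_i\gamma_i\geq(1+\Gamma_M)^{N_M}-1$ but the strongest MTD fails the single-user SIC test $\gamma_{(1)}\geq\Gamma_M(1+\gamma_B+\sum_{j\geq2}\gamma_{(j)})$ --- the generic situation whenever $\gamma_B$ is large --- and there you only gesture at ``identify a feasible subset $S$'' and verify all of its subset constraints by telescoping, without saying how $S$ is chosen. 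The majorization fallback is unlikely to work: already for $N_M=1$ the function $R_B(\gamma_1)$ in~(\ref{eq:RBcompact}) is non-monotonic (decreasing up to $\Gamma_M$, then increasing), so the achievable rate is not Schur-monotone in $(\gamma_1,\dots,\gamma_{N_M})$ and an arbitrary profile cannot be reduced to the symmetric one by a majorization step.

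The missing idea is Lemma~\ref{lem:MAC_MTDsonly}: among all subsets $\mathcal{S}$ of MTDs that violate $1+\sum_{M_i\in\mathcal{S}}\gamma_i\geq(1+\Gamma_M)^{|\mathcal{S}|}$, take one of \emph{minimal cardinality} $\mathcal{S}_u$; a short contradiction argument (scale the SNRs of any putatively decodable sub-collection by the noise from the rest of $\mathcal{S}_u$ and combine the resulting inequality with the satisfied one for the smaller set) shows that \emph{no} signal in $\mathcal{S}_u$ can be decoded, so all of $\mathcal{S}_u$ must be treated as noise. The induction then runs on the complement with SNRs divided by $1+\sum_{j\in\mathcal{S}_u}\gamma_j$, and the very fact that the inequality for $\mathcal{S}_u$ is violated, i.e. $1+\sum_{j\in\mathcal{S}_u}\gamma_j<(1+\Gamma_M)^{|\mathcal{S}_u|}$, supplies exactly the factor needed to turn $(1+\Gamma_M)^{N_M-|\mathcal{S}_u|}$ into $(1+\Gamma_M)^{N_M}$. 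If no subset inequality is violated, all MTDs are jointly decodable and a direct check of the constraints involving $B$ (which then reduce to the assumed inequalities $1+\sum_{M_i\in\mathcal{S}}\gamma_i\geq(1+\Gamma_M)^{|\mathcal{S}|}$) finishes the proof. Your skeleton is salvageable, but you must replace ``find a decodable subset and cancel it'' by ``find a minimal non-decodable subset and treat it as noise''; that reversal is what makes the induction close for arbitrary, unbalanced SNR profiles.
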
 
\begin{proof}
The proof is given in Appendix~\ref{sec:ProofOfGammaB_NM>1}.
\end{proof}

An interesting side effect of the derived zero outage downlink rate bounds is how they affect the outage of the MTDs links towards $U$. In the case of joint decoding (JD), all inequalities that involve $R_B$ must be satisfied, due to the assumed decodability of $R_B$, such that it follows that the decodability of each individual $M_i$ does not depend on the transmission from $B$. In other words, we can calculate the probability of outage for the MTDs by considering only the multiple access channel of $N_M$ users, which contains only the MTDs as transmitters, but not $B$.

\subsection{Single-User Decoding}
\label{sec:SuccessiveUserDecoding}

We now consider the case in which $U$ applies Single-User Decoding (SD) in each step, treating the yet-to-be-decoded user as a noise. 
\begin{figure}[tb]
	\begin{center}
		\includegraphics[width=\linewidth]{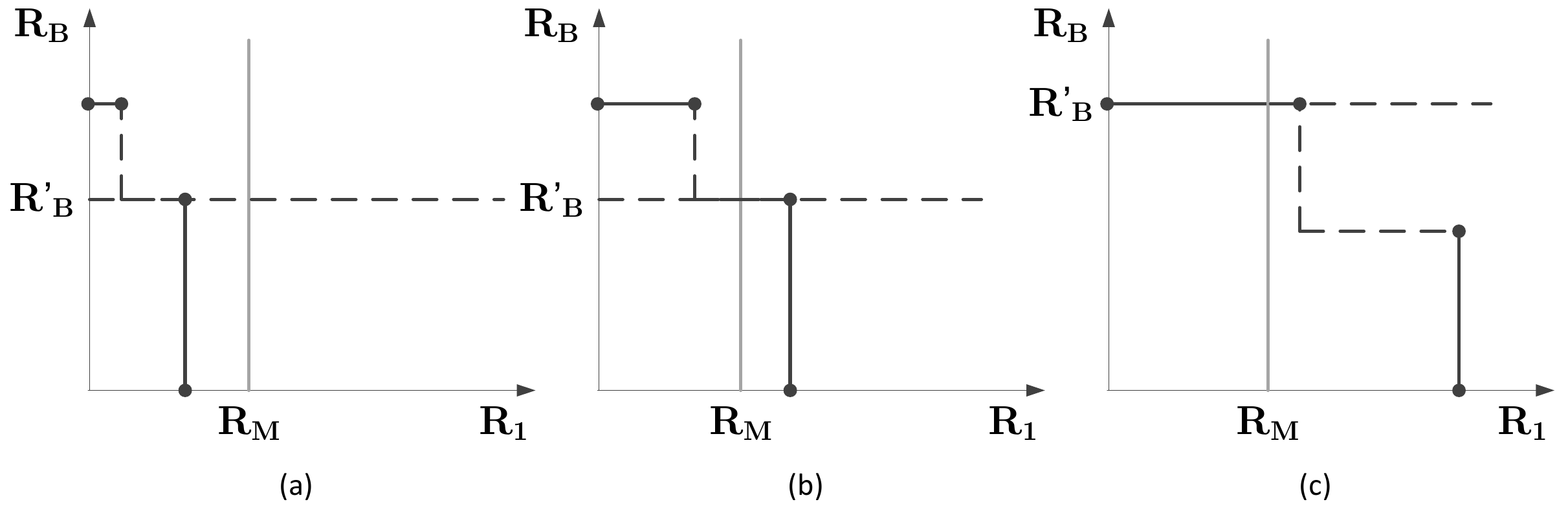}
	\end{center}
	\caption{Illustration of the working regimes of a Single-User decoder with successive interference cancellation, where one of the links has fixed rate. The transmission from $B$: (a) is decoded while treating the MTD transmission as noise; (b) is decoded while treating the MTD transmission as noise, due to the Single-User decoding; (c) is decoded in the presence of noise, after the transmission from the MTD has been decoded and subtracted from the the composite signal.}
	\label{fig:SDRegime}
\end{figure}

We treat first the case $N_M=1$, where we show how to select the maximal downlink rate for given $\gamma_B$ that guarantees that there is no outage. Recall that there are three operating regions of the JD receiver for $N_M = 1$. When SD is applied, there are only two decoding options, depicted on Fig.~\ref{fig:SDRegime}:
\begin{itemize}
\item If the link $M_1-U$ is weak, then $\gamma_1$ is low and the signal from $B$ should be decoded by treating the signal from $M_1$ as noise. This regime is depicted in Fig.~\ref{fig:SDRegime}(a) and Fig.~\ref{fig:SDRegime}(b). 
\item If the link $M_1-U$ is very strong, given that the MTD rate is fixed to $R_M$, then the signal from $M_1$ is decoded, subtracted and a ``clean'' signal of $B$ is decoded at the maximal possible rate $C(\gamma_B)$. It can be shown that the signal from $M_1$ is decodable and thus the link $M_1-U$ can be treated as strong when $\gamma_1 = \phi_1 = \Gamma_M(1+\gamma_B)$. This regime is depicted in Fig.~\ref{fig:SDRegime}(c).
\end{itemize}

We can then define $R_B (\gamma_{1})$ as the maximal achievable $R_B$ for given $\gamma_{1}$, which can be written as:
\begin{equation}\label{eq:RBcompactSD}
	R_B(\gamma_{1}) = \left\{
  \begin{array}{lr}
    \C{\frac{\gamma_B}{1+\gamma_{1}}} &  \gamma_{1} <  \Gamma_M(1+\gamma_B)\\
    \C{\gamma_B}                                    &  \gamma_{1} \geq \Gamma_M(1+\gamma_B) \\
  \end{array}
\right.
\end{equation}

\begin{prop}\label{lem:GammaB_SUDNe1}
Let there be a single MTD sending at rate $R_M$ and let $U$ apply successive single user decoding. If $B$ knows the instantaneous SNR $\gamma_B$, but not $\gamma_{1}$, then the maximal downlink transmission rate that is always decodable by $U$ is $R_B = C(\Gamma^{SD}_B)$ where
	\begin{equation}\label{eq:GammaBSingleUserSD}
		\Gamma^{SD}_B = \frac{\gamma_B}{1+\Gamma_M(1 + \gamma_B)}
	\end{equation}
	and $\Gamma_M=C^{-1}(R_M)$.  
\end{prop}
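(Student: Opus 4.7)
The plan is to mirror the structure of the proof of Lemma~\ref{prop:GammaB_JUD}: show that the maximal zero-outage rate equals $\inf_{\gamma_1 \geq 0} R_B(\gamma_1)$, and then verify that this infimum equals $C(\Gamma^{SD}_B)$, where $R_B(\gamma_1)$ is the piecewise function in~(\ref{eq:RBcompactSD}). The argument splits naturally along the threshold $\phi_1 = \Gamma_M(1+\gamma_B)$, which separates the two SIC regimes shown in Fig.~\ref{fig:SDRegime}.

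First I would establish the lower bound $R_B(\gamma_1) \geq C(\Gamma^{SD}_B)$ for every $\gamma_1 \geq 0$. For $\gamma_1 \in [0,\phi_1)$ the function $C\bigl(\gamma_B/(1+\gamma_1)\bigr)$ is strictly decreasing in $\gamma_1$, so
\[
R_B(\gamma_1) = C\!\left(\frac{\gamma_B}{1+\gamma_1}\right) > C\!\left(\frac{\gamma_B}{1+\phi_1}\right) = C\!\left(\frac{\gamma_B}{1+\Gamma_M(1+\gamma_B)}\right) = C(\Gamma^{SD}_B).
\]
For $\gamma_1 \geq \phi_1$, the second branch gives $R_B(\gamma_1) = C(\gamma_B) > C(\Gamma^{SD}_B)$, since $\Gamma^{SD}_B < \gamma_B$ whenever $\Gamma_M > 0$. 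Together these two cases show that $B$ can always transmit at rate $C(\Gamma^{SD}_B)$ and be decoded, regardless of $\gamma_1$.

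Next I would argue the bound is tight by approaching $\phi_1$ from the left. Since $C\bigl(\gamma_B/(1+\gamma_1)\bigr)$ is continuous and equals $C(\Gamma^{SD}_B)$ in the limit $\gamma_1 \to \phi_1^-$, for any candidate rate $R' > C(\Gamma^{SD}_B)$ one can pick $\gamma_1$ slightly below $\phi_1$ so that $R_B(\gamma_1) < R'$; that value of $\gamma_1$ causes outage for $R'$. This rules out any rate above $C(\Gamma^{SD}_B)$.

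The only subtlety is that $R_B(\gamma_1)$ has a jump discontinuity at $\phi_1$: just below the threshold SIC cannot peel off the MTD and the achievable rate is only $C(\Gamma^{SD}_B)$, whereas at $\gamma_1 = \phi_1$ decoding the MTD first becomes feasible and the rate jumps to $C(\gamma_B)$. Consequently the infimum is not attained; still, the argument above establishes it as the maximal rate that is \emph{always} decodable. Unlike the JD case in Lemma~\ref{prop:GammaB_JUD}, there is no intermediate joint-decoding interval that drags $R_B(\gamma_1)$ smoothly back up, which is precisely the reason $\Gamma^{SD}_B < \Gamma^{JD}_B$ and why SD is strictly weaker than JD.
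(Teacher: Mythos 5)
Your proof is correct and follows essentially the same route as the paper's: a two-case lower bound on $R_B(\gamma_1)$ split at the threshold $\Gamma_M(1+\gamma_B)$, followed by the observation that $R_B(\gamma_1)$ approaches $C(\Gamma^{SD}_B)$ as $\gamma_1$ tends to the threshold from below, so no larger rate can be zero-outage. Your treatment of the tightness step and the non-attained infimum is, if anything, slightly more explicit than the paper's.
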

\begin{proof}
\begin{figure}
	\centering
		\includegraphics[width=0.8\linewidth]{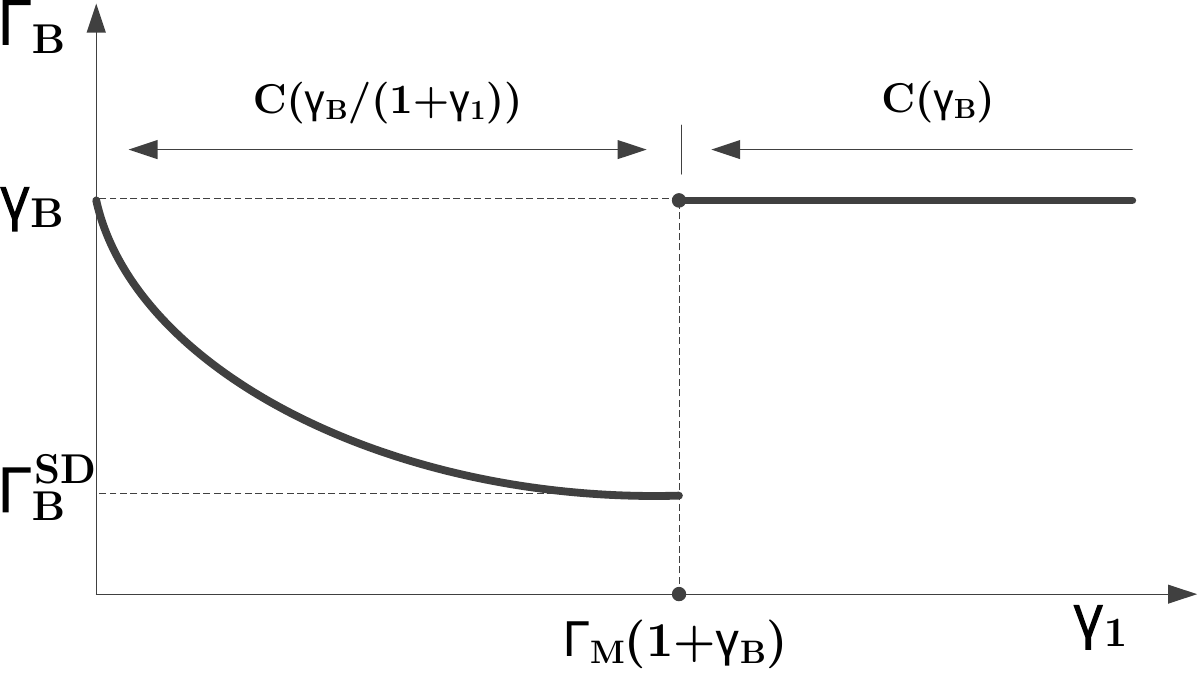}
	\caption{Achievable $\Gamma_B$ when using Single-User Decoding (SD) given known $\gamma_B$ that does not put the $B-U$ link in outage.}
	\label{fig:RateRegionSD_PoB}
\end{figure}

The proof is straightforward. If $\gamma_1<\Gamma_M(1+\gamma_B)$ then:
\begin{equation}
\Gamma^{SD}_B= \frac{\gamma_B}{1+\Gamma_M(1 + \gamma_B)}< \frac{\gamma_B}{1+\gamma_1}
\end{equation}
If  $\gamma_{1} \geq \Gamma_M(1+\gamma_B)$, then 
\begin{equation}
\Gamma^{SD}_B< \gamma_B
\end{equation}
since $\Gamma_M(1+\gamma_B)>0$. The function $R_B(\gamma_{1})$ plotted on Fig.~\ref{fig:RateRegionSD_PoB} approaches value $C(\Gamma^{SD}_B)$ as $\gamma_1=\Gamma_M(1 + \gamma_B)-\epsilon$, where $\epsilon>0$ and $\epsilon \rightarrow 0$. Stated precisely, $C(\Gamma^{SD}_B)$ is the supremum of the zero-outage $R_B$.
\end{proof}
Clearly, the zero outage rate in the SD setting is much lower than the one in the JD setting.
Analogous to the case of JD, it is natural to ask what happens when $N_M>1$ and still assuming that $B$ is ignorant about $\gamma_j$, where $j=1, 2, \ldots N_M$. The following proposition shows that there is no positive rate $R_B$ that can lead to zero outage when SD is applied. 

\begin{prop}\label{lem:GammaB_SUDNg1}
Let there be a $N_M>1$ MTDs, each sending at rate $R_M$ and let $U$ apply successive single user decoding. If $B$ knows the instantaneous SNR $\gamma_B$, but not the SNRs of the links MTD-$U$, then there is no positive downlink rate that can guarantee that $U$ can decode the signal of $B$, i.e.
\begin{equation}
\Gamma^{SD}_{B,N_M}=0 \qquad \mathrm{ for } \qquad N_M>1
\end{equation}
\end{prop}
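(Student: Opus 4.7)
The plan is to show that for every proposed $\Gamma_B>0$, the channel can realize MTD SNRs for which SD at $U$ cannot decode the signal from $B$. Generalizing the two-regime structure that underlies Proposition~\ref{lem:GammaB_SUDNe1}, the SD receiver has two operating modes. In Mode (A) all $N_M$ MTD signals are treated as additive noise and $B$ is decoded, which succeeds iff $\gamma_B/(1+\sum_i \gamma_{i})\geq\Gamma_B$. In Mode (B) the $N_M$ MTD signals are successively single-user-decoded (in some order, each time with $B$ and the still-undecoded MTDs in the interference), cancelled, and $B$ is then recovered at clean SNR $\gamma_B$; a necessary condition for Mode (B) to succeed is that whichever MTD is decoded last satisfies $\gamma_{i}/(1+\gamma_B)\geq\Gamma_M$, i.e., $\gamma_{i}\geq\phi_1:=\Gamma_M(1+\gamma_B)$.

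Fix $\Gamma_B>0$ and set $K:=\gamma_B/\Gamma_B-1$. I would construct the adversarial configuration by picking $\gamma_{1}>K$ and $0<\gamma_{j}<\phi_1$ for every $j\geq 2$. Mode (A) then fails because $\gamma_B/(1+\sum_i\gamma_{i})\leq \gamma_B/(1+\gamma_{1})<\gamma_B/(1+K)=\Gamma_B$. Mode (B) fails because, for every $j\geq 2$, the interference seen by $M_j$ in any decoding step (before $B$ is cancelled) is at least $1+\gamma_B$, so the single-user SNR available to $M_j$ is at most $\gamma_{j}/(1+\gamma_B)<\phi_1/(1+\gamma_B)=\Gamma_M$, keeping $M_j$ below the decoding threshold no matter which order is attempted. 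Since neither mode allows $U$ to decode $B$, $\Gamma_B$ cannot be a zero-outage rate; as $\Gamma_B>0$ was arbitrary, $\Gamma^{SD}_{B,N_M}=0$.

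The main obstacle is justifying the two-mode dichotomy itself: for $N_M=1$ it is forced by the problem structure, but for $N_M>1$ one must adopt the same SD framework and in particular rule out ``intermediate'' strategies that would cancel only a proper subset of the MTDs before decoding $B$. Once the two-mode structure is granted, the argument reduces to a concise adversary combining a large $\gamma_{1}$ (which spoils Mode (A) through accumulated noise) with any single $\gamma_{j}<\phi_1$ for some $j\geq 2$ (which spoils Mode (B) by making at least one MTD perpetually indecodable); the remaining $\gamma$-values can be chosen freely in $(0,\phi_1)$ without changing the conclusion.
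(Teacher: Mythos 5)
There is a genuine gap, and it is precisely the point you flag and then set aside: the two-mode dichotomy is not merely ``the main obstacle,'' it is false, and your adversarial configuration is defeated by one of the intermediate strategies you exclude. With $\gamma_1>K=\gamma_B/\Gamma_B-1$ and $\gamma_j<\phi_1=\Gamma_M(1+\gamma_B)$ for $j\geq 2$, the single-user SINR of $M_1$ is
$\gamma_1/\bigl(1+\gamma_B+\sum_{j\geq 2}\gamma_j\bigr)\geq \gamma_1/\bigl(1+\gamma_B+(N_M-1)\phi_1\bigr)$,
whose denominator is a constant independent of $\Gamma_B$; as $\Gamma_B\to 0$ we have $\gamma_1\to\infty$, so $M_1$ becomes decodable and cancellable. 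After cancelling only $M_1$, the receiver decodes $B$ treating the remaining weak MTDs as noise at SINR at least $\gamma_B/\bigl(1+(N_M-1)\Gamma_M(1+\gamma_B)\bigr)$, a positive constant. Hence your configuration only rules out rates above that constant; it does not rule out all positive rates, and no repair of the ``Mode (A)/Mode (B)'' framing can save it, because partial cancellation is a legitimate SD strategy that genuinely succeeds here. The underlying issue is that a single dominant interferer is the wrong adversary: making $\gamma_1$ huge is exactly what makes it removable.

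The paper's construction is the opposite of yours: it takes $N_M=2$ and lets $\gamma_1\leq\gamma_2$ both grow large while $\gamma_2$ stays below its own single-user decoding threshold $\Gamma_M(1+\gamma_1+\gamma_B)$, so that the two strong MTDs mutually jam each other --- neither can ever be decoded first, hence neither can be cancelled --- and $B$ is forced to swallow both as noise, giving $\Gamma^{SD}_{B,2}\leq\gamma_B/(1+\gamma_1+\gamma_2)\to 0$. The essential mechanism is mutual blocking between two comparably strong interferers, which is exactly what is unavailable when $N_M=1$ and what your one-strong-plus-rest-weak configuration cannot produce. (As a side remark, even the balanced construction requires $\gamma_1+\gamma_2$ to grow without bound while both MTDs remain undecodable, which should be checked against the constraint $\gamma_2<\Gamma_M(1+\gamma_1+\gamma_B)$; but in any case the adversary must keep the strong interferers individually undecodable, not make one of them overwhelmingly strong.)
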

\begin{proof}
It is sufficient to show this for the case $N_M=2$ and consider $\gamma_1$ and $\gamma_2$, assuming without loss of generality that $\gamma_1 \leq \gamma_2$. Since single-user decoding is used, the signal of $M_2$ should be decoded first, treating the signals of $M_1$ and $B$ as a noise. Let us consider only the case 
$\gamma_2< \frac{\Gamma_M}{1+\gamma_1+\gamma_B}$. Here the signal from $M_2$ cannot be decoded and should be treated as noise. But, since 
\begin{equation}
\gamma_1 \leq \gamma_2< \frac{\Gamma_M}{1+\gamma_1+\gamma_B}
\end{equation}
it follows that both the signals from $M_1$ and $M_2$ should be treated as noise. However, in that case the rate that guarantees no outage of the downlink signal should be chosen:
\begin{equation}
\Gamma^{SD}_{B,2} \leq  \frac{\gamma_B}{1+\gamma_1+\gamma_2}
\end{equation}
which goes arbitrarily close to zero as $\gamma_1, \gamma_2$ increase. Thus the maximal zero-outage rate is zero. Since $\Gamma_{B,N_M}^{JD}$ does not increase with $N_M$, it follows that the claim of the proposition is true for any $N_M>1$. 
\end{proof}
The last proposition reveals that joint decoding is instrumental in preserving a nonzero rate when the interfering channels $\gamma_j$ are unknown.

\subsection{Practical Considerations} 
\label{sub:practical_considerations}

The derived zero-outage rates for both JD and SD, show that $B$ is not required to have CSIT from the MTD-U links. Therefore, this information does not need to be exchanged prior to connection establishment, neither during the time while the communications are active. Nevertheless, there are other signaling exchanges that need to be accounted for, as we discuss in the following.

First, it is the time synchronization, which can be accomplished by each device listening to the downlink control signaling and possibly some other timing adjustments. Depending on the synchronization schemes in place it might correspond to extra cooperation overhead. Second, the cellular receiver needs to be able to estimate the channels of the active transmitters over the MAC, i.e. the channels $M_q-U$ of the transmitting MTDs and the channel $B-U$. This can be achieved by having dedicated preambles for each MTD that is associated with a given cellular receiver $U$, as well as a dedicated preamble used by $B$ for downlink transmission. 

Since all MTDs use a single codebook that is known both at the cellular user $U$ as well as at the base station $B$, there is no need to operationally exchange information on the MTD codebooks. An open question remains how to design a practical joint decoding scheme where one or multiple (when $N_M>1$) links have a fixed rate. We reiterate the fact that the existence of single MTD codebook enables efficient design of the joint decoder. 

Finally, it should be noted that zero-outage downlink transmission is possible because the interference power that is subject to variation comes from the devices that send at fixed rates and whose signals can be decoded if their interference is sufficiently strong. However, if there is an external source of variable interference that is not decodable, regardless of how large the interference power is, then it is not possible to have zero-outage downlink transmissions. The reason is that, to achieve zero outage, the receiver needs to know the instantaneous MAC region while that is not the case if there is an external variable interference. In practice, the receiver can have an estimate upper bound on the power coming from external interference and operate with a reduced (pessimistic) version of the MAC region.


\section{Performance Characterization}
\label{sec:PerformanceAnalysis}

In this section we derive the expected value of the normalized capacity in the $B-U$ link, $E[R_B]$, resultant from the zero outage rates for both decoding settings at $U$.
These analytical results are compared numerically to stochastic simulations in Section~\ref{sec:Numerical Results}.

\subsection{Preliminaries}
\label{sec:Preliminaries}

The analytical results are obtained in a stochastic geometry setting.
Therefore, without loss of generality, we assume that all network nodes are deployed uniformly and isotropically in a disk with radius $R$, with $U$ positioned at its origin.
The considered networks nodes are: one Base Station $B$, $N_M$ MTDs associated with $U$ and $N_I$ MTDs not associated with $U$.

Further, for analytical tractability we consider the following bounded pathloss and fading model for the network nodes\footnote{We note that $|h|^2$ is bounded, since it avoids the singularity at $r=0$~\cite{5226961}, allowing the resulting SNR distributions to have finite moments~\cite{Haenggi2012}.},
\begin{equation}\label{eq:PathLossFading}
	|h_x|^2 = h K (1 + r_x)^{-\alpha}
\end{equation}
where $\alpha$ is the path-loss exponent, $K_x$ denotes the extra losses associated with node $x$, $r_x$ is the distance between the node $x$ and $U$ and $h$ is the instantaneous fading realization with distribution $f_{h}(x)$.

The expected value of the received power of the signal transmitted by the $x$ node, $E[\zeta_x]$, is computed as follows,
\begin{align}\label{eq:expectedReceivedPower}
	E[\zeta_x]  &= P_x E[|h_x|^2]\\ \nonumber
							&= \int_0^R \int_0^\infty K P_x  v (1 + u)^{-\alpha} f_{h}(v) f_{R}(u) \mathrm{d}v \mathrm{d}u\\ \nonumber
	            &\overset{(a)}{=} \int_0^R K P_x (1 + u)^{-\alpha} f_{R}(u) \mathrm{d}u\\ \nonumber
							&\overset{(b)}{=} -\frac{2 K P_x (R \alpha + R^2 \alpha - (R + 1)^\alpha - R^2 + 1)}{R^2 (R + 1)^\alpha (\alpha^2 - 3\alpha + 2)}
\end{align}
where (a) follows from $f_{h}(x)$ being the exponential fading distribution with unitary mean, i.e. $E[|h|^2]=1$, as stated in Section~\ref{sec:SystemModel}.
(b) follows from assuming that $\alpha>2$ and that $f_{R}(x)$ is the distribution that models the distance between the $U$ and $x$, which is derived by noticing that the uniform and isotropic distribution of nodes in a disk is proportional to the arc of the disk edge,
\begin{equation*}
		f_R(r) = \rho 2 \pi r \Longleftrightarrow \int_0^R f_R(u) \mathrm{d}u = 1 \Longleftrightarrow \rho = \frac{1}{\pi R^2}.
\end{equation*}
The resulting pdf is $f_{R}(r) = 2r / R^2$ and the corresponding cdf $F_{R}(r) = r^2 / R^2$.

To model the interference contribution of the $N_I$ MTDs not associated with $U$, we recall $\tilde{z}$ from~\eqref{y_u}, which we define:
\begin{equation} \label{z_noise_and_interference}
	\tilde{z} = \sum_{k=1}^{N_I} h_{M_k} x_{M_k} + z
\end{equation}
where $z$ denotes the complex Gaussian noise with variance $E[|z|^2] = \sigma^2$, $h_{M_k}$ denote the respective channel gains from the $M_k-U$ link and $x_{M_k}$ denotes the zero mean Gaussian complex signal with $E[|x_{M_k}|^2] = P_M$.
The variance $E[|\tilde{z}|^2]$ is computed as follows,
\begin{align}
	E[|\tilde{z}|^2] &= \tilde{\sigma}^2 = P_M \sum_k^{N_I} |h_{M_k}|^2 + E[|z|^2] = \sum_{k=1}^{N_I} E[\zeta_k] + E[|z|^2] \\ \nonumber
									 &= -N_I\frac{2 K P_M (R \alpha + R^2 \alpha - (R + 1)^\alpha - R^2 + 1)}{R^2 (R + 1)^\alpha (\alpha^2 - 3\alpha + 2)} + \sigma^2\\ \nonumber
									 &\overset{(a)}{=} -N_I\frac{2 K P_M (\sqrt{\frac{N_I}{\pi \lambda_I}} \alpha + \sqrt{\frac{N_I}{\pi \lambda_I}}^2 \alpha - (\sqrt{\frac{N_I}{\pi \lambda_I}} + 1)^\alpha - \sqrt{\frac{N_I}{\pi \lambda_I}}^2 + 1)}{\sqrt{\frac{N_I}{\pi \lambda_I}}^2 (\sqrt{\frac{N_I}{\pi \lambda_I}} + 1)^\alpha (\alpha^2 - 3\alpha + 2)} + \sigma^2.
\end{align}
where (a) comes from the $N_I$ nodes being deployed uniformly and isotropically in the disk with radius $R$ and area $\pi R^2$.
Then, the interfering MTDs density is given by $\lambda_I = \frac{N_I}{\pi R^2}$.
The asymptotic case is computed by applying the substitution $R = \sqrt{\frac{N_I}{\pi \lambda_I}}$ and taking the limits when $N_I$ tends to $\infty$, with constant $\lambda_I$. Then we obtain the following asymptotic approximation.
\begin{equation}
	\lim_{N_I \rightarrow \infty} E[|\tilde{z}|^2] = \frac{2 \pi K P_M \lambda_I}{\left(\alpha^2 - 3\alpha + 2\right)} + \sigma^2
\end{equation}
We note that this approximation is very close to $E[|\tilde{z}|^2]$ with finite $R$.

Finally, we lower-bound  $E[\gamma_x]$ using the Jensen's inequality:
\begin{align}\label{eq:ExpectedSNR}
	E[\gamma_x] & = E[ \frac{\zeta_x} {\sum_{k=1}^{N_I} E[\zeta_k] + \sigma^2 } ] = E[ \zeta_x] E[ \frac{1} {\sum_{k=1}^{N_I} E[\zeta_k] + \sigma^2 } ] \\ \nonumber
				&\geq \frac{E[\zeta_x]}{E[\sum_{k=1}^{N_I} E[\zeta_k] + \sigma^2 ]} 
				= \frac {E[\zeta_x]} {\tilde{\sigma}^2}
\end{align}
Where the lower bound is due to $\frac{1}{x+a}$ being convex in the domain of $x$, when $a>0$. If there is no interference, then there is equality in~(\ref{eq:ExpectedSNR}). 

\subsection{Mean Downlink Rate}
\label{sec:DownlinkRate}

In the JD setting, the mean downlink rate for $N_M$ connected MTDs, $E[R_B^{JD}]$, is computed as follows:
\begin{align}\label{eq:GammaJDB}
	E[R_B^{JD}] &\overset{(a)}{\leq} \C{ E[\Gamma_B^{JD}]}\\ \nonumber
							&= \C{\int_0^{\infty} \frac{u}{\left(1 + \Gamma_M\right)^{N_M}} f_{\gamma_B}(u)  \mathrm{d}u}\\ \nonumber
							&= \C{\frac{E[\gamma_B]}{\left(1 + \Gamma_M\right)^{N_M}}}. 
\end{align}
where $f_{\gamma_B}(x)$ denotes the $\gamma_B$ distribution.
The (a) upper bound results from the Jensen's inequality since $\C{x}$ is concave in the domain of $x$. If there is no interference, then substituting~(\ref{eq:ExpectedSNR}) for $E[\gamma_B]$ preserves the upper bound. However, if there is interference, then~(\ref{eq:ExpectedSNR}) provides a lower bound, such that one cannot claim the upper bound in~(\ref{eq:GammaJDB}). Therefore we use the approximation:
\begin{align}\label{eq:GammaJDBapprox}
	E[R_B^{JD}] \approx \C{\frac{E[\zeta_B]}{\tilde{\sigma}^2 \left(1 + \Gamma_M\right)^{N_M}}}. 
\end{align}
In the SD setting, the downlink zero outage upper bound is only provided for $N_M = 1$.
The mean downlink rate, $E[R_B^{SD}]$, is computed as follows,
\begin{align}\label{eq:GammaSD_B}
	E[R_B^{SD}] &\leq \C{ E[\Gamma_B^{SD}]}\\ \nonumber
							&= \C{\int_0^{\infty} \frac{u}{1 + \Gamma_M\left(1 + u\right)} f_{\gamma_B}(u)  \mathrm{d}u}\\ \nonumber
							&\overset{(a)}{\leq} \C{\frac{E[\gamma_B]}{1 + \Gamma_M\left(1 + E[\gamma_B]\right)}} \\ \nonumber
							&\overset{(b)}{\approx} \C{\frac{E[\zeta_B]}{\tilde{\sigma}^2 (1 + \Gamma_M\left(1 + E[\zeta_B]/\tilde{\sigma}^2\right))}}.
\end{align}
where (a) the upper bound is given by the Jensen's inequality since the function $g(u) = \frac{u}{a + b u}$ is strictly concave in the domain of u, for $a, b \in \Re^+$. We again use the approximation (b), similar to the discussion about~(\ref{eq:GammaJDBapprox}).

\section{Numerical Results}
\label{sec:Numerical Results}

In the following we show the numerical results obtained by evaluation of the analytical expressions and through stochastic simulations with enough repetitions to ensure numerical stability.
Table~\ref{tab:SimulationScenarioSettings} lists the more relevant considered system parameters.
\begin{table}[t]
	\centering
		\begin{tabular}{ l c }
		\hline
		\textbf{Parameter} & \textbf{Value} \\ \hline
		$\sigma^2$ & $ -97.5 [dBm]$~\cite{HarriHolma2011} \\
		$\alpha$ & 4 \\
		$P_M$ & $-10 [dBm]$ \\
		$P_B$ & $30 [dBm]$ \\
		$K$ & $-30 [dB]$ \\
		$R$ & $200 [m]$ \\
		\hline
		\end{tabular}
	\caption{Simulation scenario settings.}
	\label{tab:SimulationScenarioSettings}
\end{table}
%


We consider the case where there is a single MTD associated with $U$ and no aggregate interference, i.e. $N_M = 1$ and $N_I = 0$. In absence of external interference, the analytical expressions $E[R^{JD}_B]$ and $E[R^{SD}_B]$ serve as upper bounds, as depicted on Fig.~\ref{fig:NumResNMe1NIe0}. It can be seen that the upper bound for JD is tight, while it is quite loose for SD. We observe that the cost of using the lower complexity receiver architecture SD is a lower zero outage downlink rate $E[R_B]$, than the one achievable with JD. Nevertheless, as $R_M$ tends to very low values, the average rates $E[R^{JD}_B]$ and $E[R^{SD}_B]$ converge to the value obtained as if the machine-type transmission is absent. 

\begin{figure}
	\centering
		\includegraphics[width=\linewidth]{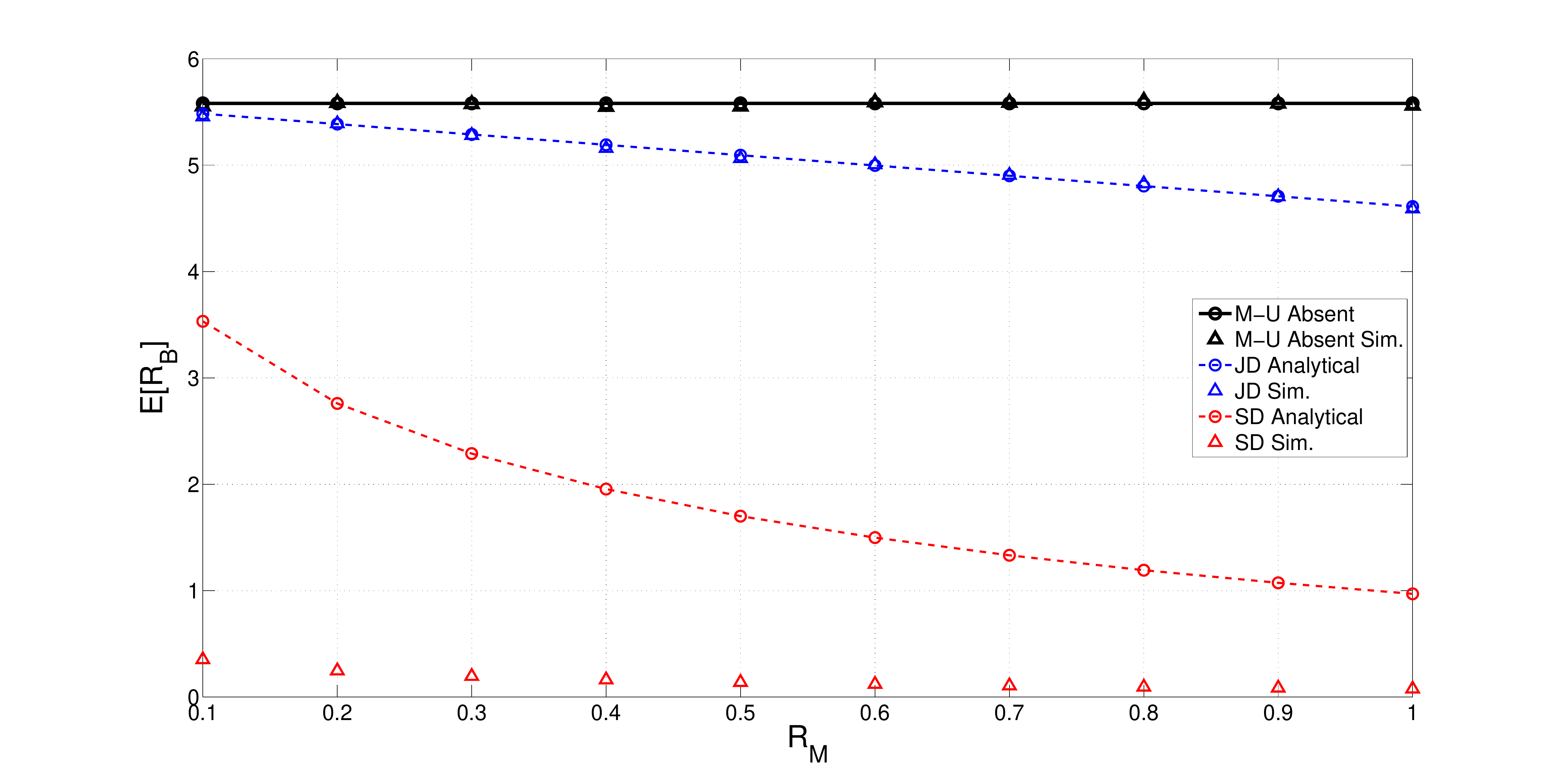}
	\caption{Comparison of the $E[R_B]$ derived analytical upper bounds with the simulations for the JD and SD settings, when $N_M = 1$ and $N_I = 0$.}
	\label{fig:NumResNMe1NIe0}
\end{figure}
%


We now consider the case where there are multiple MTDs associated with $U$, i.e. $N_M > 1$, while still ignoring the external interference. The selection of a zero outage downlink rate is considered only in the JD setting, since, as elaborated previously, it is impossible to have it when SD is used. In Fig.~\ref{fig:NumericalResultsNMg1NMI0} is depicted the behavior of zero outage downlink rate with the number of users. As expected, the higher is the number of users and $R_M$ then the lower is the zero-outage rate $R_B$.
Of special interest, is the case where $R_M$ is very low, since from the results it can be seen that a large number of MTDs can be supported with minimal impact on the allowed $E[R_B]$.
\begin{figure}
	\centering
		\includegraphics[width=\linewidth]{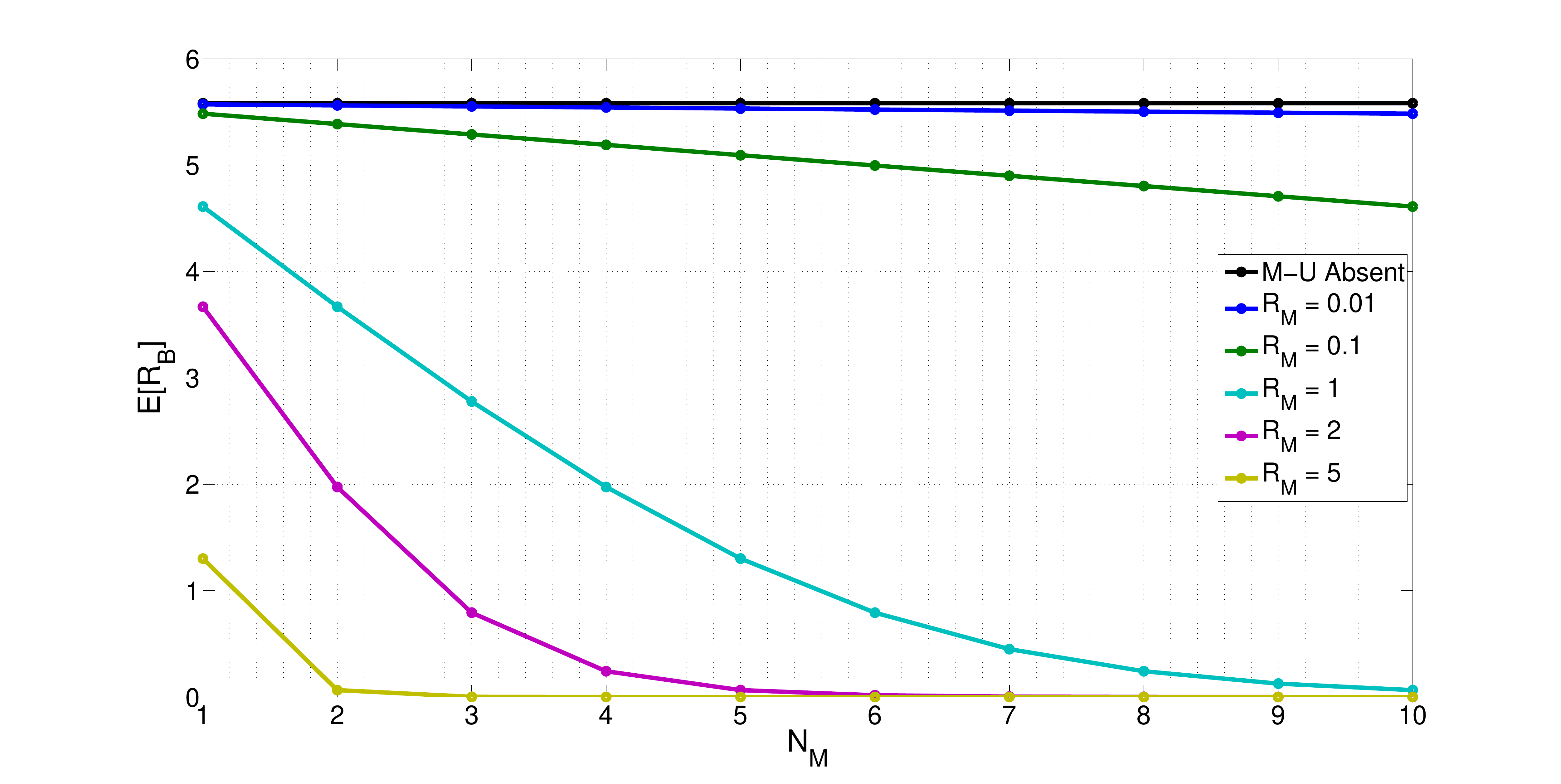}
	\caption{$E[R_B]$ for multiple MTDs in the JD decoding setting, using the analytical upper bound defined in~\eqref{eq:GammaJDB}.}
	\label{fig:NumericalResultsNMg1NMI0}
\end{figure}
%


Finally, we consider the case $N_M = 1$ and aggregate interference due to other MTDs not associated with $U$, i.e. $N_I > 1$.
Fig.~\ref{fig:NumericalResultsNMg1Ratio} depicts the behavior of analytical approximation $E[R_B]$ (which is now an approximation and not an upper bound) with increasing $\lambda_I$, i.e. increasing the aggregated interference. We note that the analytical approximations while loose, still capture the two main trends observed in the simulation results. 
\begin{figure}
	\centering
		\includegraphics[width=\linewidth]{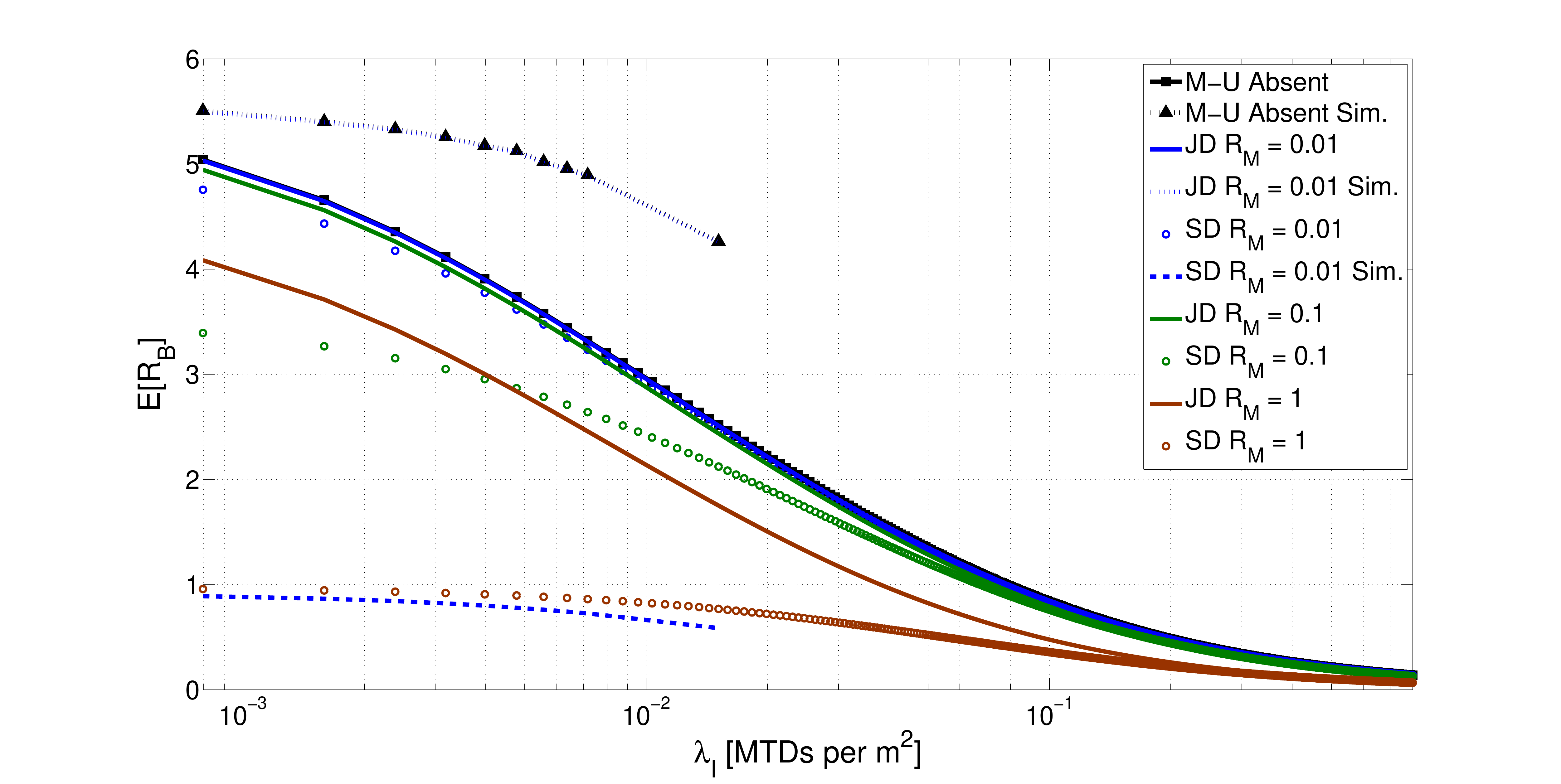}
	\caption{$E[R_B]$ of JD and SD versus $\lambda_I$, where simulation results are provided for comparison only for lower $\lambda_I$ and $R_M = 0.01$, due to the high computational complexity associated with higher densities and, in the case of $R_M = 0.01$, to not overcrowd the plot.}
	\label{fig:NumericalResultsNMg1Ratio}
\end{figure}
First, as the aggregate interference $\lambda_I$ increases, the downlink rate $E[R_B]$ decreases. Furthermore, the rates $E[R^{JD}_B]$ and $E[R^{SD}_B]$ converge, since in the regime of high interference, the MAC region at the cellular receiver shrinks and JD no longer provides benefits over SD. Second, for very small values of $R_M$ the average downlink rate $E[R_B]$ becomes very close to the one achievable when the transmission from $M_1$ is absent, both for JD and SD. 

\section{Conclusion}
\label{sec:Conclusion}

We have considered a network-assisted Device-to-Device (D2D) scenario that enables the underlay of fixed-rate links in the cellular downlink. The motivation for considering fixed-rate D2D links is rooted in Machine-Type Communications (MTC), also known as M2M communications, which feature fixed, low rates. We consider a scenario in which a user $U$ receives a downlink transmission from a Base Station $B$, while simultaneously receiving signals from $N_M>1$ Machine-Type Devices (MTDs) attached to $U$ and each using a fixed rate $R_M$. Thus, contrary to the mainstream, the underlay operation of the D2D links in our scenario is supported during the downlink transmission. Assuming that $B$ knows only the channel $B-U$, but not the interfering channels from the MTDs to $U$, we have proven that there is a positive downlink rate that can always be decoded by $U$, leading to zero-outage of the downlink signal. Such a positive rate exists only when $U$ uses the full decoding region of the multiple access channel defined by $N_M+1$ transmitters: $B$ and $N_M$ MTDs. We have provided a close form of the maximal zero-outage rate. We have also considered a simpler operation at $U$, where $U$ uses a Single-User Decoding (SD) and successively removes the decoded users. We have shown that, using SD, it is possible to have a positive downlink rate with zero outage only when $N_M=1$, but not when $N_M>1$. Overall, our approach shows that underlaying during the downlink transmission is a viable alternative when the multiple access channels created in such a setting are fully utilized at the downlink cellular receiver. One of the most interesting issues for future research is how to design a practical joint decoder at the cellular receiver that takes advantage of the fact that the codebooks of the MTDs are fixed, while the number of active MTDs can vary. 

\appendices

\section{Proof of Theorem~\ref{thm:GammaB_NM>1}}
\label{sec:ProofOfGammaB_NM>1}

In order to prove the theorem, we introduce the following notation for a multiple access channel (MAC) of $k$ transmitters  $\{M_1, M_2, \ldots M_{k}\}$, each of them sending at a fixed rate $R_M=C(\Gamma_M)$.
For an easier notation, let $\gamma_i$ be the SNR of the device $M_i$. In its original formulation, the MAC problem requires all $k$ signals to be decoded successfully and is described by a set of $2^{k}-1$ inequalities. Let ${\cal S}$ denote any nonempty subset from the set of MTDs $\{M_1, M_2, \ldots M_{k}\}$ and $|{\cal S}|$ denote its cardinality. Then the MAC inequalities can compactly be written as:
\begin{align}
	|{\cal S}|R_M  &\leq C \left( \sum_{M_i \in {\cal S}} \gamma_i \right) \label{eq:MACinequalityforS}
\end{align}
or in alternative form as:
\begin{align}
	1+\sum_{M_i \in {\cal S}} \gamma_i & \geq (1+\Gamma_M)^{|{\cal S}|} \label{eq:MACinequalityforSalt}
\end{align}
From the perspective of the selection of the maximal downlink rate $R_B$, it is not important that all $k$ signals from the MTD are decoded. We therefore reformulate the MAC problem and, for given $R_M$ and given set of SNRs $\{\gamma_i \}$, determine what is the maximal number of rate $R_M$ signals that can be decoded. 
We state and prove the following lemma:
\begin{lem} \label{lem:MAC_MTDsonly}
Let ${\cal S}_m$ be the subset of $\{M_1, M_2, \ldots M_{k}\}$ that has the lowest cardinality $|{\cal S}_m|=S$ among the subsets ${\cal S}$ for which the inequality (\ref{eq:MACinequalityforSalt}) is not satisfied. Then none of the signals from the devices in ${\cal S}_m$ can be decoded. 
\end{lem}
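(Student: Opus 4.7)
The plan is to argue by contradiction: assume some $M_j\in{\cal S}_m$ is successfully decodable at $U$. I would treat the receiver's strategy abstractly as a (possibly multi-stage) decoder with interference cancellation, where stage $l$ jointly decodes a group ${\cal T}_l$ after subtracting the already-decoded groups ${\cal T}_1,\ldots,{\cal T}_{l-1}$ and treating the yet-undecoded residual set ${\cal N}_l$ as Gaussian noise. Decodability of ${\cal T}_l$ then demands the standard MAC inequalities for every nonempty ${\cal S}'\subseteq{\cal T}_l$, with $1+\sum_{M_i\in{\cal N}_l}\gamma_i$ playing the role of the ``$1$'' in~\eqref{eq:MACinequalityforS}.

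The first key step is to focus on the earliest stage $l^\star$ at which any device of ${\cal S}_m$ is decoded; by that choice, every device of ${\cal S}_m$ not in ${\cal T}_{l^\star}$ must still belong to the residual set ${\cal N}_{l^\star}$. Setting ${\cal S}^\star={\cal S}_m\cap{\cal T}_{l^\star}$---nonempty because it contains $M_j$---I apply the stage-$l^\star$ decodability condition to ${\cal S}'={\cal S}^\star$ and lower-bound the effective noise by just the interference from ${\cal S}_m\setminus{\cal S}^\star\subseteq{\cal N}_{l^\star}$. A short rearrangement using $\C{x}=\log_2(1+x)$ then yields an inequality of the form $(1+\Gamma_M)^{|{\cal S}^\star|}\bigl(1+\sum_{M_i\in{\cal S}_m\setminus{\cal S}^\star}\gamma_i\bigr)\leq 1+\sum_{M_i\in{\cal S}_m}\gamma_i$.

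The final step is to invoke the minimality of ${\cal S}_m$: since ${\cal S}^\star$ is nonempty, ${\cal S}_m\setminus{\cal S}^\star$ is a strict subset of ${\cal S}_m$ and therefore, by the very definition of ${\cal S}_m$, satisfies~\eqref{eq:MACinequalityforSalt}, giving $1+\sum_{M_i\in{\cal S}_m\setminus{\cal S}^\star}\gamma_i\geq(1+\Gamma_M)^{|{\cal S}_m|-|{\cal S}^\star|}$. Substituting this into the previous inequality collapses the exponents to $(1+\Gamma_M)^{|{\cal S}_m|}\leq 1+\sum_{M_i\in{\cal S}_m}\gamma_i$, which is precisely~\eqref{eq:MACinequalityforSalt} for ${\cal S}_m$ itself and contradicts its defining failure. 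The degenerate case ${\cal S}^\star={\cal S}_m$ (so ${\cal S}_m\setminus{\cal S}^\star$ is empty) falls out immediately from the same decodability condition and bypasses the minimality step.

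The main obstacle I anticipate is choosing a notion of ``decodable'' general enough to cover every receiver strategy the paper tacitly allows---pure joint decoding, successive cancellation, and any interleaving of the two---so that the stage-wise decomposition above is genuinely without loss of generality. Once that abstraction is pinned down the remaining algebra is routine; the structural crux is that the ${\cal S}_m\setminus{\cal S}^\star$ interference present at stage $l^\star$ couples multiplicatively with the minimality bound on the same subset to reassemble exactly the product $(1+\Gamma_M)^{|{\cal S}_m|}$ that drives the contradiction.
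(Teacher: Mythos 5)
Your argument is correct and is essentially the paper's own proof: your ${\cal S}^\star$ and ${\cal S}_m\setminus{\cal S}^\star$ are exactly the paper's decoded set ${\cal S}_d$ and noise set ${\cal S}_u$, and the contradiction is driven by the identical multiplicative chain $(1+\Gamma_M)^{|{\cal S}^\star|}\bigl(1+\sum_{{\cal S}_m\setminus{\cal S}^\star}\gamma_i\bigr)\leq 1+\sum_{{\cal S}_m}\gamma_i$ combined with the minimality bound on the strict subset. Your explicit multi-stage decoder abstraction and the choice of the earliest stage $l^\star$ merely make rigorous what the paper leaves implicit (that signals outside ${\cal S}_m$ can only be absent or add further noise), so it is a presentational refinement rather than a different route.
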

\begin{proof}
Without losing generality we can assume that ${\cal S}_m= \{M_1, M_2, \ldots M_S \}$. From the properties of the MAC channel consisting of ${\cal S}_m= \{M_1, M_2, \ldots M_S \}$ it follows that not all the signals from the devices in ${\cal S}_m$ can be decoded. Without losing generality, let us assume that the signals of the devices from the set ${\cal S}_u=\{M_1, M_2, \ldots M_T\}$, with $T<S$ are not decoded and should be treated as noise. Then we observe a new MAC with $S-T$ transmitters ${\cal S}_d=\{M_{T+1}, M_{T+2}, \ldots M_S\}$, but with SNRs that are scaled in order to account for the fact that the signals from the devices in ${\cal S}_u$ are treated as noise:
\begin{equation}\label{eq:ScaledSNRSSu}
 \gamma_{i}^{\bar{\cal S}_u}=\frac{\gamma_i}{1+\sum_{j=1}^T \gamma_j} \qquad \textrm{ for } T<i\leq S
\end{equation}
Since all signals from ${\cal S}_d$ are decodable, it follows that all the $2^{S-T}-1$ inequalities for the MAC channel with SNRs determined according to (\ref{eq:ScaledSNRSSu}) should be satisfied. Specifically, the following inequality needs to be satisfied:
\begin{equation}
	1+\sum_{j=T+1}^S \gamma_{j}^{\bar{\cal S}_u} \geq (1+\Gamma_M)^{S-T}
\end{equation}
which can be rewritten as:
\begin{equation}
	1+\frac{\sum_{j=T+1}^S \gamma_{j}}{1+\sum_{i=1}^T \gamma_{i}}\geq (1+\Gamma_M)^{S-T}
\end{equation}
This can be transformed as follows:
\begin{equation}\label{eq:Contradiction}
	1+\sum_{j=1}^S \gamma_{j} \geq (1+\Gamma_M)^{S-T} (1+\sum_{i=1}^T \gamma_{i}) \stackrel{\textrm{(a)}}{\geq} (1+\Gamma_M)^{S-T} (1+\Gamma_M)^{T} \geq (1+\Gamma_M)^{S} 
\end{equation}
where (a) follows from the assumption that the inequality  (\ref{eq:MACinequalityforSalt}) is satisfied for the set ${\cal S}_u$, since ${\cal S}_u \subset {\cal S}_m$. However, the inequality (\ref{eq:Contradiction}) contradicts the assumption that (\ref{eq:MACinequalityforSalt}) is violated for the set ${\cal S}_m$. Therefore, no signal from the devices in ${\cal S}_m$ can be decoded.
\end{proof}

We now proceed to the proof of the theorem.

\begin{proof}
\emph{(Theorem~\ref{thm:GammaB_NM>1})} We carry out the proof by using induction. The case $N_M=1$ is proved in Lemma~\ref{prop:GammaB_JUD}. We will then assume that the expression for $\Gamma_{B,N_M}$ is valid for $N_M=k$ and prove that it holds for $N_M=k+1$. The MAC consists of $k+2$ users, with $k+1$ MTDs and $B$. The $2^{k+2}-1$ inequalities that describe the MAC can be written as:
\begin{align}
	&1+\sum_{M_i \in {\cal S}} \gamma_i \geq (1+\Gamma_M)^{|{\cal S}|}  \label{eq:ThmMACinequalityforS} \\
	&R_B \leq C \left( \gamma_B+\sum_{M_i \in {\cal S}} \gamma_i \right)-|{\cal S}|R_M \label{eq:ThmMACinequalityforSB} \\
	&R_B \leq C(\gamma_B) \label{eq:ThmMACinequalityforB} 
\end{align}
where ${\cal S}$ is any nonempty subset of $\{M_1, M_2, \ldots M_{k+1} \}$. 

Let ${\cal S}_u$ be the set of minimal cardinality $|{\cal S}_u|=S$ for which the inequality (\ref{eq:ThmMACinequalityforS}) is not satisfied. We need to consider two cases $S>0$ and $S=0$. 

\subsubsection{$S>0$}  Without losing generality, assume that the set ${\cal S}_u=\{M_1, M_2, \ldots M_S \}$. According to Lemma~\ref{lem:MAC_MTDsonly}, none of the signals from the devices in ${\cal S}_u$ is decodable. Therefore, those signals should be treated as noise and we can observe a channel with $k+1-S$ MTDs and $B$, where the SNRs are scaled as:
\begin{align}\label{eq:ScaledSNRsBM}
	\gamma_{B}^{\bar{\cal S}_u} &=\frac{\gamma_B}{1+\sum_{j=1}^S \gamma_j} \nonumber \\
	\gamma_{i}^{\bar{\cal S}_u} &=\frac{\gamma_i}{1+\sum_{j=1}^S \gamma_j}
\end{align}
Using the inductive assumption, we know that the maximal zero-outage $R_B$ can be chosen to have the equivalent SNR of:
\begin{equation}
	\Gamma_{B,k+1-S}=\frac{\gamma_{B}^{\bar{\cal S}_u}}{(1+\Gamma_M)^{k+1-S}}=\frac{\gamma_B}{(1+\sum_{j=1}^S \gamma_j)(1+\Gamma_M)^{k+1-S}} \stackrel{\textrm{(a)}}{>} \frac{\gamma_B}{(1+\Gamma_M)^{k+1}}
\end{equation}
where (a) follows from the assumption that (\ref{eq:ThmMACinequalityforS}) is not satisfied for the set ${\cal S}_u$, translating into $1+\sum_{j=1}^S \gamma_j < (1+\Gamma_M)^{S}$.

\subsubsection{$S=0$} Here the signals from all MTDs are decodable, if we observe the MAC channel consisting only of MTDs and excluding $B$. Since it is obvious that $\gamma_B>\frac{\gamma_B}{(1+\Gamma_M)^{k+1}}$ for any $\Gamma_M>0$, we need to show that:
\begin{equation} \label{eq:NeedToshowS=0}
	C \left( \gamma_B+\sum_{M_i \in {\cal S}} \gamma_i \right)-|{\cal S}|R_M \geq  C \left ( \frac{\gamma_B}{(1+\Gamma_M)^{k+1}} \right ) 
\end{equation}
for all possible nonempty subsets ${\cal S}$. The equation (\ref{eq:NeedToshowS=0}) can be equivalently written as follows:
\begin{equation}
	\log_2 \left( \frac{1+\gamma_B+\sum_{M_i \in {\cal S}} \gamma_i }{(1+\Gamma_M)^S} \right) \geq \log_2 \left( 1+ \frac{\gamma_B}{(1+\Gamma_M)^{k+1}} \right)
\end{equation}
We write only the argument of the $\log$ function as:
\begin{align}
&\frac{1+\gamma_B+\sum_{M_i \in {\cal S}} \gamma_i }{(1+\Gamma_M)^S}=\frac{1+\sum_{M_i \in {\cal S}} \gamma_i }{(1+\Gamma_M)^S} + \frac{\gamma_B}{(1+\Gamma_M)^S} \nonumber \\
&  \stackrel{\textrm{(a)}}{\geq} 1 + \frac{\gamma_B}{(1+\Gamma_M)^S} \stackrel{\textrm{(b)}}{\geq} 1 + \frac{\gamma_B}{(1+\Gamma_M)^{k+1}}
\end{align}
where (a) follows from the assumption that (\ref{eq:ThmMACinequalityforS}) is satisfied for ${\cal S}$ and (b) follows from $S \leq k+1$. 

We have thus shown that the downlink rate $C \left ( \frac{\gamma_B}{(1+\Gamma_M)^{k+1}} \right )$ is always decodable at $U$. It only remains to show that this is the maximal possible rate with such a property. Let us choose 
\begin{equation} \label{eq:SelectedSNRsMinRate}
	\gamma_1=\gamma_2=\cdots =\gamma_{k+1}=\frac{(1+\Gamma_M)^{k+1}-1}{k+1}
\end{equation}
We first show that with such a choice of SNRs all $k+1$ signals of the MTDs can be decoded. Let ${\cal S}$ be any subset of $S$ devices. Then the inequality that needs to be satisfied for this set can written as:
\begin{align}
	1+S\frac{(1+\Gamma_M)^{k+1}-1}{k+1} & \geq (1+\Gamma_M)^S
\end{align} 
Consider the function $f(x)=\frac{(1+\Gamma_M)^{x}-1}{x}$. It can be shown that this function is monotonically increasing when $\Gamma_M>0$ and $x \geq 1$. We can then write the following inequality:
\begin{align}
	1+Sf(k+1) \stackrel{\textrm{(a)}}{\geq} 1+Sf(S)=1+S\frac{(1+\Gamma_M)^{S}-1}{S}= (1+\Gamma_M)^S
\end{align}
where (a) follows from $S \leq k+1$. Hence, all the MTDs are decodable. The bound imposed on the rate $R_B$ by the set ${\cal S}$ with cardinality $|{\cal S}|=S$ can be written as follows:
\begin{equation} \label{eq:AchievableRBbound}
	R_B \leq \log_2 \left ( \frac{1+\gamma_B+S \cdot \frac{(1+\Gamma_M)^{k+1}-1}{k+1}}{(1+\Gamma_M)^{S}} \right)
\end{equation}
It can be shown that the right-hand side of (\ref{eq:AchievableRBbound}) decreases as $S$ increases, such that it reaches its minimal value when $S=k+1$, which is 
\begin{equation}
	R_B \leq \log_2 \left ( 1+ \frac{\gamma_B}{(1+\Gamma_M)^{k+1}} \right)
\end{equation}
which proves the achievable rate $R_B$ for the selected SNRs in (\ref{eq:SelectedSNRsMinRate}).
\end{proof}

\section*{Acknowledgment}
The research presented in this paper was partly supported by the Danish Council for Independent Research (Det Frie Forskningsr\aa d) within the Sapere Aude Research Leader program, Grant No. 11-105159 ``Dependable Wireless Bits for Machine-to-Machine (M2M) Communications'' and performed partly in the framework of the FP7 project ICT-317669 METIS. The authors would like to acknowledge the contributions of their colleagues in METIS, although the views expressed are those of the authors and do not necessarily represent the project.

\end{document}